\documentclass[11pt, letterpaper]{article}

\usepackage[utf8]{inputenc}
\usepackage[margin=1in, top=1.3in, headheight=65pt, headsep=20pt]{geometry} 
\usepackage{graphicx}
\usepackage{xcolor}
\usepackage{fancyhdr}
\usepackage{titlesec}
\usepackage{hyperref}
\usepackage{lipsum} 
\usepackage[T1]{fontenc}   

\usepackage{amsmath}
\usepackage{amsfonts}
\usepackage{amssymb}
\usepackage{newpxtext,newpxmath} 
\usepackage{microtype}      
\usepackage{multirow}
\usepackage{mathrsfs}
\usepackage{caption}
\usepackage{tabularx}
\usepackage{makecell}

\usepackage{amsthm}

\usepackage{wrapfig}
\usepackage{bm}
\newtheorem{theorem}{Theorem}
\usepackage{enumitem}
\usepackage{booktabs} 

\newcommand{\modelname}{S\textsuperscript{2}-Net}
\usepackage[authoryear,round]{natbib}

\definecolor{CarolinaBlue}{HTML}{4B9CD3}
\definecolor{NavyBlue}{HTML}{13294B}
\hypersetup{
    colorlinks=true,
    linkcolor=NavyBlue,
    filecolor=magenta,      
    urlcolor=CarolinaBlue,
    citecolor=CarolinaBlue
}

\renewcommand{\headrulewidth}{0.8pt}
\renewcommand{\headrule}{\hbox to\headwidth{\color{CarolinaBlue}\leaders\hrule height \headrulewidth\hfill}}

\titleformat{\section}
  {\normalfont\Large\bfseries\color{NavyBlue}}{\thesection}{1em}{}[{\color{CarolinaBlue}\titlerule[0.8pt]}]
  
\titleformat{\subsection}
  {\normalfont\large\bfseries\color{NavyBlue}}{\thesubsection}{1em}{}

\begin{document}

\thispagestyle{fancy}

\begin{center}
    \vspace*{0.5em}
    {\LARGE \textbf{\textcolor{NavyBlue}{From Cortical Synchronous Rhythm to Brain Inspired Learning Mechanism: An Oscillatory Spiking Neural Network with Time-Delayed Coordination}}} \\[1em]
    {\large Tingting Dan and Guorong Wu} \\[0.5em]
    {\small \textit{Department of Psychiatry \& Department of Computer Science}} \\
    {\small \textit{University of North Carolina at Chapel Hill}}
\end{center}
\vspace{1em}

\begin{abstract}
\noindent Human cognition emerges from coordinated spiking dynamics in distributed neural circuits, where information is encoded via both firing rates and precise spike timing determined by brain rhythms. 
Inspired by this notion, we propose a brain-inspired learning primitive in which cognition-level neural synchrony emerges through iterative \textit{bottom-up} and \textit{top-down} interactions between micro-scale dynamics of spiking neurons and a macro-scale mechanism of oscillatory synchronization. 
Specifically, we model each parcel (e.g., a cortical region or an image pixel) in the target system as a spiking neuron embedded in a predefined connectivity scaffold. 
Low-level information is encoded in a spatiotemporal domain, where neurons are selectively grouped and fire spontaneously over time through self-organized dynamics. 
In the bottom-up route, oscillatory synchronization is formed from past spiking activity accumulated over a finite memory window. 
Since brain dynamics operate in a regime of partial and transient synchronization rather than global phase locking, we model oscillatory coordination using a time-delayed synchronization formulation, which enables a top-down modulation of heterogeneous neural spiking for a large-scale distributed system. 
Together, we devise a spiking-by-synchronization neural network (\modelname{}) that uses rhythmic timing as a control mechanism for efficient information processing. 
Promising results have been achieved across a broad range of tasks, including neural activity decoding, energy-efficient signal processing, temporal binding and semantic reasoning.
\end{abstract}

\section{Introduction}
\label{intro}
Despite empirical success, modern machine learning models struggle with dynamic feature binding and temporal adaptation, with existing object-centric solutions often lacking scalability \citep{greff2020binding, locatello2020object}. In contrast, biological systems achieve flexible, energy-efficient computation through rhythmic synchronization that transiently binds relevant features \citep{fries2005mechanism}. Taken together, these principles motivate our interest in developing a brain-inspired architecture that utilizes rhythmic modulation for efficient computation and adaptive reasoning.

\begin{figure}[t]
\centering
\begin{minipage}{0.75\linewidth}
    \centering
\includegraphics[width=1\linewidth, trim=0 0 0 0.1in, clip]{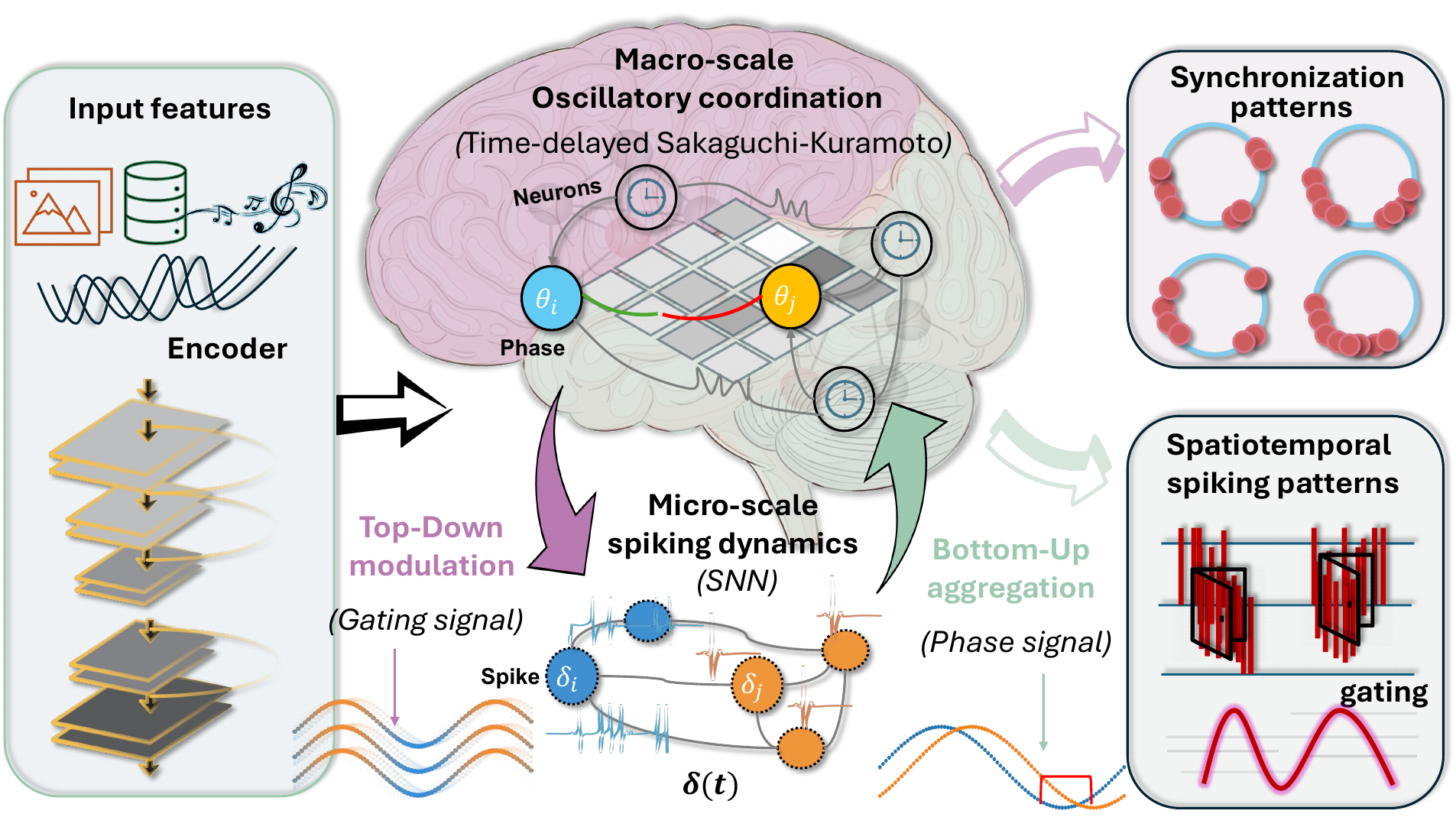}
\end{minipage}
\begin{minipage}{0.24\linewidth}
   \caption{\small {Overview of S\textsuperscript{2}-Net.} The framework integrates a macro-scale oscillatory coordination module with a micro-scale spiking dynamics module. Top-down modulation rhythmically gates neural firing, while bottom-up aggregation feeds phase signals back to the oscillatory layer. This reciprocal interaction enables self-organized synchronization and flexible neural representations.}
    \label{fig:overview}
\end{minipage}
\vskip -0.2in 
\end{figure}

\textbf{Relevant works.} A broad class of brain-inspired deep models is built on spiking neural networks (SNNs) and neuromorphic computing, which replace continuous activations with event-driven spikes to capture temporal dynamics and improve energy efficiency \citep{maass1997networks,roy2019towards}. Recent advances in surrogate gradient learning have enabled SNNs to scale to vision and temporal tasks \citep{neftci2019surrogate}. However, current SNNs have not yet matched the efficiency and robustness of biological neural systems on complex multiscale dynamical tasks. To address this limitation, Rhythm-SNN \citep{yan2025efficient} leverages heterogeneous oscillatory modulation to regulate spiking activity, enabling periodic activation at multiple frequencies. 
Another line of work draws inspiration from iterative inference and binding mechanisms in the brain. Predictive coding models introduce recurrent bottom-up and top-down updates to approximate inference in hierarchical architectures \citep{rao1999predictive}. Meanwhile, object-centric and slot-based models explicitly address the binding problem by grouping distributed features into object-level representations \citep{greff2020binding,locatello2020object}. In the reminiscence of oscillatory neural networks \citep{todri2024computing}, Kuramoto model \citep{kuramoto2005self} is introduced to compress feature representations for the images \citep{miyato2025artificial} and graphs \citep{dan2025explore} via the mechanism of oscillatory synchronization. 

Although the Kuramoto model provides a powerful analytic framework for studying synchronization and has shown promise in brain-inspired machine learning, its standard formulation assumes instantaneous coupling and therefore favors zero-lag global synchronization. Such rigid phase locking has limitations from a perspective of dynamical system, as excessive synchrony collapses representational diversity and limits flexible information processing \citep{singer1999neuronal}. In contrast, empirical evidence indicates that brain dynamics operate in a regime of partial and transient synchronization, where coordinated activity emerges selectively and dissolves as task demands change \citep{varela2001brainweb}. Introducing time delays into the Kuramoto framework relaxes instantaneous coupling and enables metastable, non–zero-lag synchronization patterns \citep{yeung1999time}, providing a more suitable foundation for modeling brain-like coordination and scalable integration with learning systems.

\textbf{Our work.}
In the human visual system, sensory input is progressively encoded through \textit{bottom-up} processing into features such as edges, motion, and color, relayed via the thalamus to the primary visual cortex, and hierarchically integrated along ventral and dorsal pathways to support object recognition and visually guided action \citep{felleman1991distributed}. In parallel, \textit{top-down} feedback from higher-order cortical areas modulates early visual representations based on context, attention, and prior knowledge \citep{rao1999predictive}. The coordinated interaction between bottom-up and top-down pathways, together with neural synchronization and predictive mechanisms, enables coherent, flexible, and adaptive visual perception in dynamic environments.
Inspired by this neuroscience insight, we carve the nature of ``intelligence" through a hierarchical architecture that combines bottom-up sensory signals with top-down predictions, with the design perspective presented in Fig. \ref{fig:overview}.


Our model architecture consists of two interacting layers. At the micro scale, spiking neurons encode information through sparse, temporally precise events, while at the macro scale, neural oscillations regulate communication by selectively synchronizing distributed populations in time \citep{fries2005mechanism}. Specifically, we model each parcel in the system, such as a cortical region or an image pixel, as a spiking neuron, coupled with others in a predefined connectivity scaffold, allowing low-level information to be represented as spatiotemporal spike patterns that self-organize over time. \textit{Bottom-up} signals are aggregated by forming oscillatory synchronization from recent spiking history within a finite memory window, capturing how local activity shapes system-level coordination.
At the macro scale, the signals from the bottom layer are neither static nor globally synchronized. Instead, we model oscillatory coordination using a time-delayed synchronization formulation \citep{yeung1999time}, which relaxes instantaneous coupling and supports metastable, non-zero-lag synchrony. This delayed coordination mechanism provides a top-down control signal that modulates heterogeneous spiking activity across large-scale distributed systems. Together, these components give rise to a spiking-by-synchronization neural network (\modelname{}), in which rhythmic timing functions as an explicit control mechanism for scalable representation binding and efficient information processing.

We first validate such a mechanism on human fMRI data where our \modelname{} accurately predicts the dynamic behavior from the underlying large-scale distributed system of human brain. In addition to the success in replicating neural activities for the human brain, we show that our proposed learning mechanism provides performance improvements across a wide spectrum of tasks, including signal denoising, temporal binding from visual inputs. In a nutshell, these empirical results highlight the importance of rethinking neural representations in a bio-plausible manner, and in particular demonstrate the potential of brain-inspired computing.

\textbf{Our major contributions} have threefold: 
\begin{itemize}[noitemsep, topsep=0pt]
    \item We unify SNN architectures and neural modulation within an interactive bottom-up/top-down framework, inspired by the neuroscience insight that cognition emerges from the interplay between precise spike-based information flow and synchronization-driven temporal coordination.
    \item By taking time delays into account, we relax instantaneous coupling and enable metastable coordination and spatially structured dynamics, which ratchet the gear of brain-like computing another notch forward.
    \item We investigate brain-inspired reasoning grounded in spatiotemporal neuronal spiking patterns, establishing a new connection between cognitive neuroscience principles and modern artificial intelligence (AI). 
\end{itemize}

\section{Method}
\label{sec:method}

\subsection{Network Architecture}\label{subsec:architecture} 
Real-world data often exhibit structured regularities, such as recurrent patterns in audio, scale-invariant self-similarity in natural images, and self-organized functional fluctuations in the human brain. Despite domain differences, these phenomena can be viewed as arising from interacting dynamical systems that exhibit oscillatory modes and structured coordination. This perspective motivates incorporating oscillatory and synchronization-based inductive biases for learning representations of complex data.

\subsubsection{Model input and network components}
Following this notion, we propose a brain-inspired neural network architecture \modelname{}, with a couple of innovative components. 
\textit{First}, we move beyond conventional data representations such as vectors or images. We instead model a generic sensory input $X \in \mathbb{R}^{T \times N}$ as a dynamical system composed of $N$ interacting entities (e.g., anatomical regions, frequency channels, image pixels or feature dimensions ), each evolving as a time series over $T$ time steps. Thus, unlike conventional artificial neural network (ANN) architectures that are explicitly tailored to specific data modalities (e.g., CNNs for images and RNNs for sequential data), our \modelname{} is inherently agnostic to the underlying structure of the input data.
Depending on the data domain, $X$ may represent a time-series of neural activity, a sequence of audio frames, or a static image encoded as a temporal sequence. The interactions among entities are encoded by a weighted graph $A \in \mathbb{R}^{N \times N}$, instantiated from biological structural connectivity, spatial proximity, or inferred from data.
Take functional neuroimages of human brain as an example, $X_i{(t)}$ denotes the BOLD (blood oxygenation level–dependent) activity of brain region $i$ at time $t$, and $A$ represents the inter-regional coupling strength based on fiber counts.

\begin{figure}
\centering
\includegraphics[width=0.99\linewidth, trim=20 0 0 0, clip]{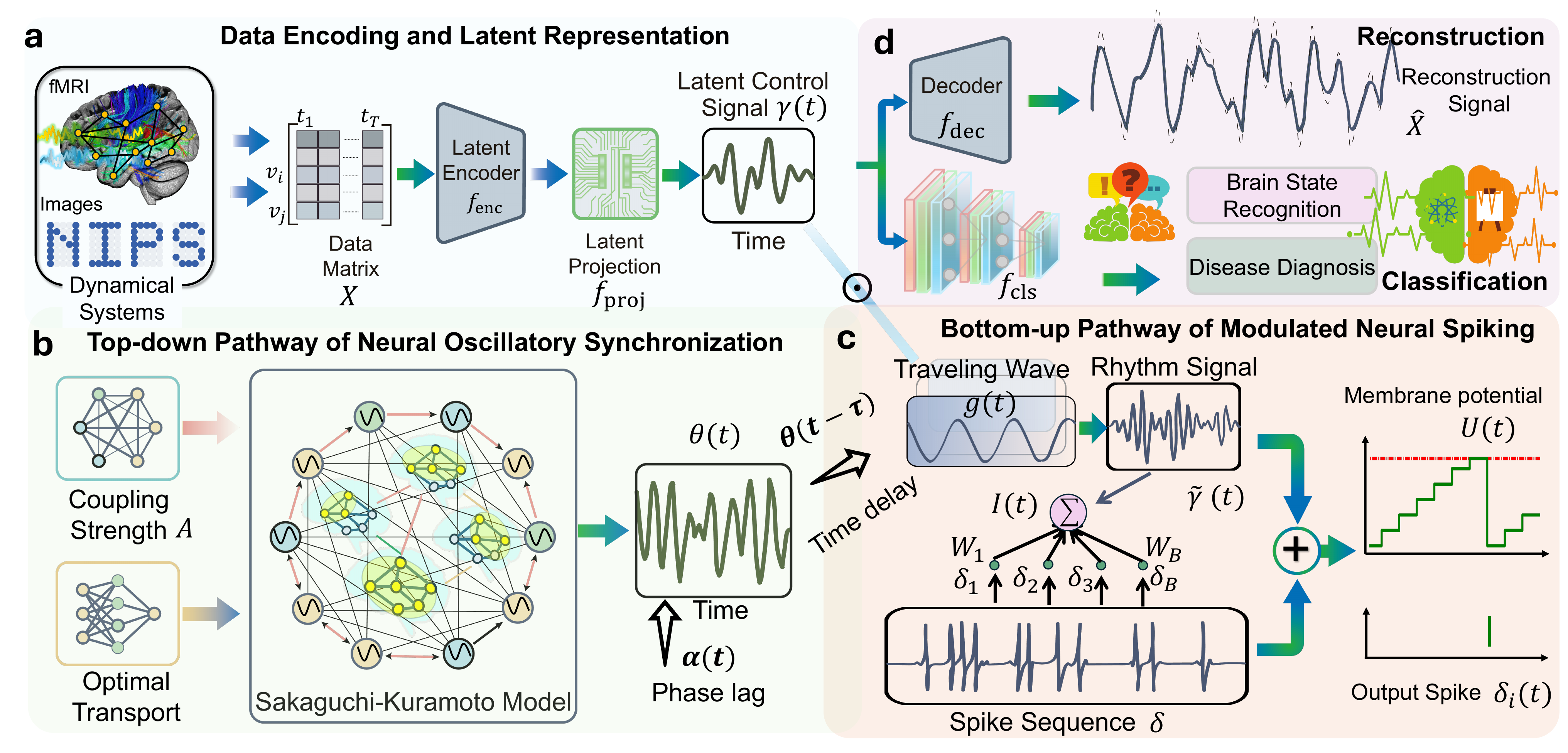}
\vskip -0.10in
\caption{\small \modelname{} architecture. {(a)} Encoding inputs to latent signal $\gamma(t)$. {(b)} Sakaguchi-Kuramoto model governs phase states $\theta(t)$. {(c)} Delayed phases $\theta(t-\tau)$ gate $\gamma(t)$ to modulate membrane potential $U(t)$ and drive SNN spike output. {(d)} Decoding and classification for downstream tasks. }
\label{fig:flow}
\end{figure}
\textit{Second}, the model backbone is an interactive spiking–synchronization ($S^2$) architecture in which neural synchrony emerges through iterative bottom-up and top-down interactions between micro-scale spiking dynamics (by SNN) and a macro-scale mechanism of oscillatory coordination (by Sakaguchi-Kuramoto model).
As shown in Fig. \ref{fig:overview}-middle, we deploy paired \textbf{spiking neurons} $\mathbb{S}=\{\mathscr{S}_1,\mathscr{S}_2,...,\mathscr{S}_N\}$ at the bottom layer and corresponding \textbf{gating neurons} $\mathbb{G}=\{\mathscr{G}_1,\mathscr{G}_2,...,\mathscr{G}_N\}$ at the upper layer, each pair is associated with the data entity $X_i$. The spiking activity on $\mathscr{S}_i$ is represented as a binary time series $\delta_i(t)\in \{0,1\}$, where 0 and 1 denote the inactive and active states, respectively. In parallel, each gating neuron $\mathscr{G}_i$ maintains a phase state $\theta_i(t)$. 
Conditioned on the current spiking configuration $\{\delta_i(t)\}$, the collective behavior of neuron firing gives rise to the phase lag $\alpha_{ij}$ between $\mathscr{G}_i$ and $\mathscr{G}_j$ during neural oscillatory synchronization.   
Meanwhile, we introduce a time delay mechanism among gating neurons to induce a rhythm signal for modulating spiking patterns across spiking population $\mathbb{S} $. 
Together, phase lag and time delay provide a principled mechanism for coupling micro-scale spiking dynamics with macro-scale oscillatory coordination with an enriched spectrum of system dynamics, enabling partial synchronization, directional interactions, and metastable transitions that are difficult to capture with delay-free coupling.

\subsubsection{Learning and inference in \modelname{}}
The fitting of model parameters involves the following learning components.

\textbf{Latent projection of control signals.} To extract coordination-relevant structure from high-dimensional inputs, we employ a two-stage encoding strategy to map raw signals into a latent driving representation (Fig. \ref{fig:flow}a):
\begin{equation}
\gamma(t) = f_{\mathrm{proj}}(z(t)) = f_{\mathrm{proj}}(f_{\mathrm{enc}}(X(t))),  \gamma(t) \in \mathbb{R}^{N}.
\label{eq:proj}
\end{equation}
\textit{First}, we deploy an encoder $f_{\mathrm{enc}}$ to remove external noise from the input signal $X(t)$.
\textit{Second}, we project the noise-free signals $z(t)$ to a time-varying \textit{control signal} $\gamma(t)$, which modulates the neural oscillatory synchronization dynamics. Specifically, $\gamma(t)$ acts on the vector-valued \textit{phase state} $\Theta (t)=\{\theta_i(t)|\theta_i(t)\in \mathbb{R}^D, i=1,...,N\}$ and steers its evolution under the Kuramoto dynamics. This design is inspired by the vector-valued Kuramoto dynamics \citep{miyato2025artificial, dan2025explore}, enabling the system to "bind" desired task-relevant patterns through synchronization.

\textbf{Infer model parameters from data.} 
The paired design of gating neurons and spiking neurons is intended to mimic key principles of information processing in the human brain, where cognition emerges from the continuous interaction between bottom-up sensory-driven signals and top-down modulatory control. 
Under the hood of bottom-up and top-down interaction, we introduce a general-purpose coordination mechanism that involves: {(i)} controlled neural oscillatory synchronization with phase delay for temporal binding and segregation, {(ii)} rhythmic gating for temporal multiplexing from traveling waves, and {(iii)} rhythm-modulated spiking neural computation for downstream decoding. The working mechanism of each component is explained in Sec. \ref{subsec:learn}

\textbf{Link to downstream tasks.} We iteratively alternate between top-down and bottom-up processes until the neural spiking patterns converge to optimal performance on downstream tasks. Meanwhile, the dynamics of neural firing and oscillatory synchronization in \modelname{} afford decent model interpretability, enabling principled analysis of machine intelligence in reasoning and problem solving.

\subsection{Learning Mechanisms of Synchronized Spiking}\label{subsec:learn} 
\subsubsection{Top-down pathway of neural oscillatory synchronization} \label{subsec:top_down}
The mechanistic role of the top-down pathway (Fig. \ref{fig:flow}b) is to update the phase state of each gating neuron in the upper-layer and generate the rhythm signals that selectively gate bottom-layer spiking neurons as a consequence of oscillatory phase synchronization. We formulate the evolution of vector-valued oscillatory state $\theta_i(t) \in \mathbb{R}^D$ in a topologically constrained Sakaguchi-Kuramoto system, which incorporates intrinsic frequencies, sensory forcing, and phase lags emerging from the bottom-layer spiking neurons.

\textbf{Coupled vector Kuramoto dynamics with time-dependent phase lags.}\label{subsec:vector_kuramoto} 
For the $d$-th component in $\theta_i$ ($d \in \{1,\dots,D\}$), the evolution of $\theta_i^d$ are determined by:
\begin{equation}
\dot{\theta}^d_{i}(t) = \omega_{i}^d + \kappa_{i}^d\sin\bigl(\gamma_i(t) - \theta_{i}^d(t)\bigr) + \frac{K}{N} \sum_{j=1}^{N} A_{ij} \sin \left( \theta_{j}^d(t) - \theta_{i}^d(t) - \alpha_{ij}(t) \right) \label{eq:vector-kuramoto}
\end{equation}
where $K$ is the global coupling strength, $A_{ij}$ is the coupling strength constraining the interaction range, and $\alpha_{ij}(t)$ characterizes the phase lag between two coupled gating neurons at time $t$. $\omega_{i}^d$ represents the learnable \textit{natural frequency} of the oscillator, capturing the intrinsic characteristic of neural vibration. $\kappa_{i}^d$ controls the strength of the attraction toward the data-specific control pattern $\gamma_i(t)$, ensuring that the dynamical system reacts to the incoming external stimuli.

\textbf{Intuition of oscillatory synchronization in Sakaguchi-Kuramoto model.}
The Sakaguchi Kuramoto model in Eq. (\ref{eq:vector-kuramoto}) shapes oscillatory synchronization by phase-shifted and delayed interactions rather than instantaneous alignment. 
$\gamma_i(t)$ in the second term is an external "push" on the phase state that steers the synchronization pattern toward a desired phase configuration. To that end, the dynamics in Eq. (\ref{eq:vector-kuramoto}) naturally give rise to metastable phase-locked patterns, which induce task-relevant rhythmic gating signals.
The third term in Eq. \ref{eq:vector-kuramoto} introduces a time-varying phase lag $\alpha_{ij}$ that arises from bottom-layer spiking dynamics. In the presence of this phase lag, the dynamics in Eq. \ref{eq:vector-kuramoto} seek to align the phase-shifted oscillations, thereby generating rhythmic signals that selectively trigger bottom-layer spiking neurons.

\textbf{Understand the mechanistic role of top-down pathway through potential energy and Lyapunov stability.}
Under the assumption of symmetric coupling ($A_{ij}=A_{ji}$) and antisymmetric phase lags ($\alpha_{ij}=-\alpha_{ji}$), the dynamics of phase transition characterized by Eq.~\eqref{eq:vector-kuramoto} is governed by the potential energy $V({\theta_i^d}; t)$:
\begin{equation}
\label{eq:lyapunov}
V({\theta_i^d}; t) = \underbrace{-\frac{K}{2N}\sum_{j} A_{ij} \cos(\theta_{j}^d(t) - \theta_{i}^d(t) - \alpha_{ij}(t))}_{V_{\mathrm{sync}}: \text{System dynamics}} \underbrace{- \kappa_{i}^d \cos(\gamma_i(t) - \theta_{i}^d(t)) \vphantom{\sum_{j}}}_{V_{\mathrm{data}}: \text{System control}}
\end{equation}
The term $V_{\mathrm{sync}}$ decomposes the Sakaguchi–Kuramoto dynamics into (i) coherent phase-gradient patterns and (ii) a non-conservative rotational component that induces uniform drift along the energy manifold, converting static phase-locked states into \textit{traveling waves} \citep{bronski2018configurational,ju2014dynamics}.
In parallel, $V_{\mathrm{data}}$ acts as a potential well anchoring the state under the control of $\gamma_i(t)$.
As shown in Appendix~\ref{app:stability}, the potential function $V$ is a global Lyapunov function.
Thus, the dynamics in Eq.~\eqref{eq:vector-kuramoto} satisfy $\dot{\theta}_{i}^d \propto -\partial V / \partial \theta_{i}^d$, implying that the total energy decreases monotonically ($\dot{V} \leq 0$). This guarantees that the neural oscillatory synchronization converges to a stable configuration that optimally balances brain geometry constraints with observational data consistency.

\textbf{Interaction with the bottom-up pathway: Generating rhythm signals for modulating neural spiking.} 
A growing body of neurophysiological evidence shows that cortical and hippocampal activity is organized via traveling oscillatory waves, spanning theta to gamma bands, which are closely linked to perception, attention, working memory, and learning \citep{deco2009key,breakspear2017dynamic}. Following this notion, we further introduce a (predefined or learnable) time delay $\tau$ on each gating neuron to generate the rhythmic gating signal $g(t)$ by binding with the delayed phase state $\theta_i(t-\tau)$:
\begin{equation}
g_i(t)
= \sigma_{\text{gate}}\Bigl(\sin\bigl(\theta_i(t-\tau)\bigr)\Bigr), \quad i=1,...,N,
\label{eq:gating}
\end{equation}
where $\sigma_{\text{gate}}(\cdot)$ scales the sine wave to $[0,1]$. 
The time delay $\tau$ serves as a critical bridge between the top-down gating and bottom-up execution pathways. Building on the oscillatory dynamics in Eq. (\ref{eq:vector-kuramoto}), the delay is functioning as a frequency-dependent gating mechanism that selectively activates specific phase patterns as traveling waves. We exploit this property to dynamically determine the grouping of spiking neurons through the rhythm signal $\tilde{\gamma}_i(t)$: 
\begin{equation}
\tilde{\gamma}_i(t) = g_i(t) \odot \gamma_i(t),
\end{equation}
where `$\odot$' is the Hadamard operator. 
Through this process, static phase separation is converted into temporal multiplexing, enabling selective binding via phase-aligned timing rather than spatial convergence across spiking neurons.

\subsubsection{Bottom-up pathway of modulated neural spiking}
\label{sec:bottom-up}

\textbf{Rhythm-modulated spiking neural network.}
Unlike point-neuron formulations, each spiking neuron comprises multiple dendritic branches with distinct temporal filtering constants, reflecting the compartmentalized integration observed in biological dendrites \citep{london2005dendritic}. This structure enables simultaneous modeling of rapid oscillatory activity and slower contextual dependencies. 
The workflow of \modelname{} consists of the following major steps.

\textit{(1) Spike generation conditioned on rhythm.} The input is an aggregation of $B$ dendritic branches. The raw activation $J_{i,b}(t)$ of the $b$-th branch incorporates the rhythm-modulated signal $\tilde{\gamma}_i(t)$ on $i^{th}$ spiking neuron as:
$
J_{i,b}(t) = {W}_{i,b}^{(\gamma)} \cdot \tilde{\gamma}_i(t) + \sum\nolimits_{j} W_{ij,b}^{(\mathrm{rec})} \delta_j(t-1),$
where ${W}_{i,b}^{(\gamma)}$ projects the oscillatory latent vector onto this selected branch (modulated by $\tilde{\gamma}_i(t)$), and $W_{ij,b}^{(\mathrm{rec})}$ captures recurrent interactions. The branch state $l_{i,b}(t)$ evolves as a low-pass filter with a branch-specific decay rate $\beta_b \in (0,1)$:
$
l_{i,b}(t) = \beta_b l_{i,b}(t-1) + (1-\beta_b) J_{i,b}(t).$
Thus, the total dendritic current $I_i(t)$ is formed by aggregating the filtered outputs of all branches: $I_i(t) = \sum_{b=1}^{B} l_{i,b}(t)$. 

\textit{(2) Modulated membrane potential update.} The somatic membrane potential update is governed by the external rhythmic gate $g_i(t)$ derived in Eq. \ref{eq:gating}. We implement a soft gating mechanism where $g_i(t)$ modulates the effective integration rate of the neuron. The hidden membrane potential $U_i(t)$ is updated via interpolation between the previous state and the new leaky-integrated state $V_i(t)$:
\begin{equation}
V_i(t) = \epsilon U_i(t-1) + I_i(t) - \vartheta \delta_i(t-1), \quad
U_i(t) = (1 - g_i(t)) \cdot U_i(t-1) + g_i(t) \cdot V_i(t),
\label{eq:srnn}
\end{equation}
where $\epsilon$ is the membrane leak factor and $\vartheta$ is the firing threshold. This mechanism allows the rhythm to dynamically ``freeze" ($g \rightarrow 0$) or ``update" ($g \rightarrow 1$) the neural state, preserving information during off-cycles while enabling gradient propagation. The output spike $\delta_i(t)\in \{0,1\}$ is generated as $\delta_i(t) = \Xi(U_i(t) - \vartheta)$, where $\Xi(\cdot)$ denotes a Heaviside step function.

\textit{(3) Spatiotemporal spiking feature aggregation.} We employ a fully connected readout layer to map the population spiking activity $\delta(t)$ to a continuous-valued output $\mathcal{H}(t)$. This layer integrates spatiotemporal spike patterns into a low-dimensional decision variable that is differentiable and amenable to optimization for downstream learning tasks.

\textbf{Interaction with the top-down pathway: Infer phase lag from the temporal structure of spiking neurons.}
Although direct measurement of the phase lag is challenging, we introduce an intuitive recursive strategy to estimate $\alpha_{ij}(t)$, \textit{on-the-fly}, using the temporal history of spiking profiles $\{\delta(t)\}$ and $\{\delta(t-1)\}$ (Fig. \ref{fig:flow}c). Assuming that gating neurons are topologically connected according to the adjacency matrix $A$, we cast this estimation as an entropically regularized Wasserstein optimal transport problem defined on the graph induced by $A$, where the objective minimizes the transport cost ($C_{ij} \propto 1/({A_{ij} + \varepsilon}$) between consecutive spiking distributions subject to marginal constraints. The resulting optimal transport plan is efficiently computed via Sinkhorn iterations, enabling scalable and online inference of $\alpha_{ij}(t)$ through fast iterative matrix–vector scaling \citep{cuturi2013sinkhorn}.

\textbf{Deploy \modelname{} in downstream applications}
We drive feature learning using three downstream applications (Fig. \ref{fig:flow}d): (1) \emph{self-supervised reconstruction} as a regularizer, (2) \emph{dynamic state recognition} for sequence labeling, and (3) \emph{subject-level classification} via temporal pooling. Detailed formulations and loss functions are provided in Appendix~\ref{subsec:model}.  

\subsection{Proof-of-Concept Simulation Design} 
   
\textbf{Experimental setup.} To validate \textit{binding-by-synchrony} and \textit{traveling-wave propagation}, we constructed two synthetic environments (Fig. \ref{fig:sim2}): (a) disjoint squares and (b) nested topological structures\footnote{In computational neuroscience, nested structures are regarded as a canonical benchmark \citep{roelfsema1998object,chen1982topological} for figure-ground segregation. They challenge the system to solve the 'binding problem' by integrating spatially separated segments into a coherent whole.}. Objects are initialized with distinct intensities ($1.0$ vs. $0.5$) to enable feature-driven repulsion via the graph coupling term $A_{ij}$. We applied Gaussian perturbation at varying levels: a baseline $\sigma=0.05$ for (a), and a severe noise regime $\sigma=0.2$ for (b) to test robustness under low signal-to-noise ratio.

\textbf{Discrete object discovery and attention rhythms (Fig. \ref{fig:sim2}a).}
In the disjoint task (two objects shown in Fig. \ref{fig:sim2}a-left), our \modelname{} breaks phase symmetry, driving the system from a disordered transient state into a stable anti-phase regime. Interestingly, the gating dynamics $g(t)$ (Fig. \ref{fig:sim2}a-middle) emerge as smooth, alternating sinusoidal waves. This dynamic creates a "push-pull" mechanism that mathematically replicates biological attention rhythms (e.g., Theta oscillations). The resulting spatiotemporal spike raster (Fig. \ref{fig:sim2}a-right) reveals temporal chunking, where objects are phase-locked to alternating firing windows separated by refractory gaps.

\textbf{Topological complexity and perceptual bistability (Fig. \ref{fig:sim2}b).} Standard coupled oscillators often collapse into in-phase synchronization or a winner-take-all regime where the larger ring suppresses the smaller square (Fig. \ref{fig:sim2}a-left). In contrast, \modelname{} achieves perceptual bistability by leveraging graph-aware traveling waves. As shown in the gating signal map (Fig. \ref{fig:sim2}b-middle), a phase gradient propagates along the ring, binding its spatially disjoint segments into a unified entity. This mechanism ensures balanced strength, despite the size imbalance, the smaller square is not suppressed but competes equally for representation. The spike raster (Fig. \ref{fig:sim2}b-right) reveals a traveling wave topology, where the square fires precisely within the local phase troughs of the ring's propagation, resolving the topological conflict through autonomous temporal multiplexing.

\textbf{Quantitative evaluation and stability analysis.} We further evaluate feature binding using Kuramoto-derived metrics (Fig. \ref{fig:sim}d). The system rapidly transitions from a disordered state to a stable synchronized manifold. Specifically, intra-object synchronization converges to a theoretical maximum ($R_{intra} \to 1.0$) for perfect internal binding, while inter-object overlap collapses ($R_{merge} \to 0$) to ensure clear anti-phase separation. See Appendix \ref{sim_app} for detailed derivations.

\begin{figure}
\centering
\includegraphics[width=1\linewidth, trim=0 0 0in 0in, clip]{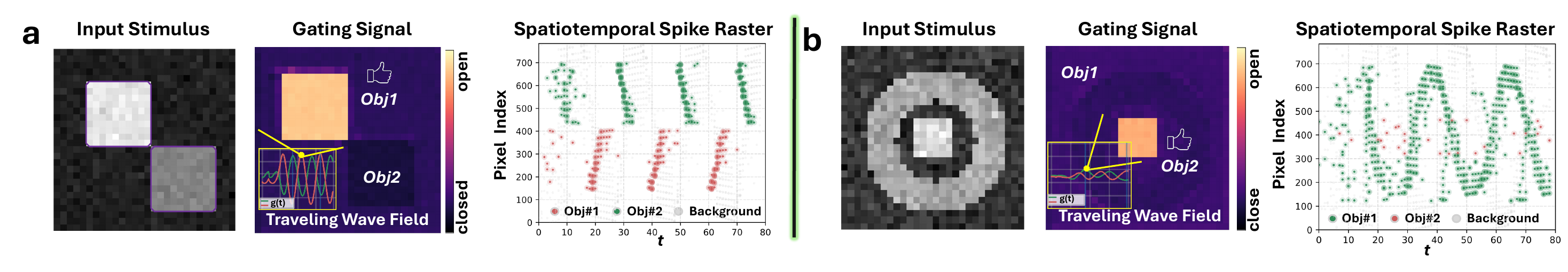}
\vskip -0.13in
\caption{\small Validation of the power of \modelname{} framework. (a) Disjoint Object Discovery.  (b) Nested Topological Binding.
Panels (left to right) show the input stimulus, gating signals $g(t)$ reflecting attention rhythms, and spatiotemporal spike rasters showing wave propagation. }
\label{fig:sim2}
\end{figure}

\section{Experiments}
\label{experiement}
\subsection{Experimental Setup}
\textbf{Comparison methods.} We compare some representative SNN architectures, including:
(i) a feedforward SNN (FFSNN) \citep{wu2019direct};
(ii) a recurrent SNN (SRNN) \citep{zenke2021remarkable};
(iii) the long short-term memory SNN (LSNN) \citep{bellec2018long};
(iv) an adaptive SRNN (ASRNN) \citep{yin2020effective};
(v) a dendritic heterogeneity SNN (DH-SNN) and
(vi) Rhythm-SNN \citep{yan2025efficient}.

\textbf{Human brain dataset.} We evaluate \modelname{} on six large-scale neuroimaging datasets on both task recognition and disease diagnosis settings.
For task recognition, we use HCP-A \citep{bookheimer2019lifespan}, HCP-YA \citep{van2013wu}, and UK Biobank (UKB) \citep{miller2016multimodal} datasets, which include multiple task-based fMRI paradigms with corresponding diffusion MRI for constructing structural brain networks. In addition, we evaluate our method on the Working Memory task (HCP-WM) from HCP-YA, which involves alternating cognitive states within a single fMRI scan. This setting requires time-point–level task state recognition rather than assigning a single label to an entire scan.
This formulation is more challenging, as it demands precise modeling of rapid task transitions, temporal dependencies, and subtle changes in functional brain dynamics.
For disease diagnosis, we adopt ADNI \citep{jack2008alzheimer}, PPMI \citep{marek2011parkinson}, and NIFD \citep{planche2023anatomical} datasets, covering Alzheimer’s disease (AD), Parkinson’s disease (PD), and frontotemporal dementia (FTD), respectively, using resting-state fMRI and diffusion MRI data.
Across all datasets, task recognition is formulated as a multi-class classification problem, whereas disease diagnosis is treated as a binary or multi-class classification task depending on the dataset.
Detailed dataset descriptions and experimental settings are provided in Appendix \ref{dataset} and \ref{details}. For quantitative evaluation, we use accuracy (Acc), precision (Pre), and F1-score (F1) to assess performance, while the quality of signal reconstruction is quantified using mean absolute error (MAE) and  mean squared error (MSE).

\subsection{Experiment Results on Neural Activity}

\textbf{Brain state recognition.} Table \ref{results} (left) presents the quantitative comparison of \modelname{} against state-of-the-art (SOTA) SNN methods on HCP-A, HCP-YA, and UKB. 
The results demonstrate that \modelname{} consistently outperforms all baseline methods across all metrics (*$p<0.05$, paired-$t$-test). 
Even compared to advanced baselines such as DH-SNN and Rhythm-SNN, which incorporate dendritic heterogeneity and neural oscillations respectively. 
This indicates that modeling the coupled interaction between structural and functional information allows for a more robust identification of distinct cognitive states across diverse subject populations. 

\textbf{Brian neural activity decoding.}
We further evaluate the model's capability in fine-grained temporal decoding using HCP-WM dataset. 
As shown in Table \ref{results} (middle), this task is particularly challenging for conventional SNNs. For instance, FFSNN and simple recurrent models achieve significantly lower accuracy due to the complexity of the temporal dynamics. 
However, \modelname{} achieves a remarkable performance leap. 
This suggests that our architecture is uniquely capable of decoding rapid cognitive shifts. Furthermore, we assess the signal reconstruction quality to ensure the model learns meaningful biological representations rather than just optimizing for classification. 
Fig. \ref{fig:recon} shows the reconstruction error on the HCP-WM dataset. 
\modelname{} achieves the lowest MSE and MAE, significantly outperforming the comparative methods (e.g., Rhythm-SNN and DH-SNN). 
This low reconstruction error highlights that \modelname{} preserves the intrinsic topological and temporal structure of the brain signals while performing high-level decoding.

\textbf{Disease diagnosis.}
In the context of neurological disorder diagnosis (ADNI, PPMI, and NIFD), robustness to noise and small sample sizes is critical. 
As reported in Table \ref{results} (right), \modelname{} demonstrates SOTA performance in detecting AD, PD, and FTD. 
While datasets like PPMI and NIFD present high inter-class similarity and limited training samples, our method yields the highest detection accuracy and F1-scores compared to the baselines. 
This superiority suggests that \modelname{} effectively extracts discriminative biomarkers, offering a promising tool for early computer-aided diagnosis.

\begin{table*}[t]
\centering
\begin{minipage}[c]{0.68\textwidth}
    \centering
    \setlength{\tabcolsep}{3pt} 
    \renewcommand{\arraystretch}{1.05}
    \caption{Performance comparison of \modelname{} against SOTA SNN models across human brain datasets via 10-fold cross-validation. `*' denotes significantly outperforming other methods.}
    \label{results}
    \resizebox{\textwidth}{!}{%
    \begin{tabular}{l l c c c c c c c} 
    \toprule
    \multirow{2}{*}{\textbf{Model}} & \multirow{2}{*}{\textbf{Metric}} & \textbf{HCPA} & \textbf{HCP-YA} & \textbf{UKB} & \textbf{HCP-WM} & \textbf{ADNI} & \textbf{PPMI} & \textbf{NIFD} \\
    & & (4 tasks) & (7 tasks) & (2 tasks) & (9 tasks) & (3 states) & (4 states) & (5 states) \\
    \midrule

    \multirow{3}{*}{FFSNN} 
    & Acc & 97.33±0.70 & 95.46±1.51 & 97.93±0.22 & 37.22±0.94 & 91.00±11.74 & 65.08±3.35 & 49.37±1.97 \\
    & Pre & 97.03±0.94 & 95.62±1.50 & 97.94±0.22 & 35.26±0.91 & 94.50±7.14 & 50.11±5.20 & 38.00±7.14 \\
    & F1  & 95.99±1.03 & 95.47±1.50 & 97.93±0.22 & 33.07±0.88
 & 89.27±13.72 & 35.41±6.19 & 29.63±3.40 \\
    \midrule

    \multirow{3}{*}{SRNN} 
    & Acc & 96.67±0.60 & 94.49±1.42 & 97.02±0.16 & 49.74±2.81 & 98.00±6.00 & 66.09±2.56 & 46.42±1.27 \\
    & Pre & 95.53±1.11 & 94.61±1.35 & 97.02±0.16 & 47.66±3.28 & 98.75±3.75 & 45.75±4.66 & 35.18±11.07 \\
    & F1  & 94.89±0.99 & 94.47±1.41 & 97.02±0.16 & 45.73±3.19
 & 97.62±7.14 & 39.84±4.60 & 25.35±4.12 \\
    \midrule

    \multirow{3}{*}{LSNN} 
    & Acc & 96.64±0.42 & 94.68±1.46 & 98.15±0.23 & 42.81±3.45 & 96.67±6.67
 & 66.09±4.21 & 51.40±1.38 \\
    & Pre & 95.77±0.90 & 94.78±1.39 & 98.17±0.23 & 40.00±3.50 & 97.75±4.53 & 47.21±5.01 & 43.34±9.14 \\
    & F1  & 94.81±0.79 & 94.63±1.49 & 98.15±0.23 & 38.60±3.77 & 96.06±7.95
 & 38.46±7.39 & 35.90±3.30 \\
    \midrule

    \multirow{3}{*}{ASRNN} 
    & Acc & 96.46±0.59 & 92.83±1.25 & 98.79±0.08 & 78.07±3.11 & 96.33±7.37 & 64.38±3.78 & 46.06±2.38 \\
    & Pre & 95.54±1.40 & 92.99±1.24 & 98.81±0.08 & 78.28±3.02 & 97.33±5.54 & 42.65±9.27 & 37.74±8.37 \\
    & F1  & 94.42±0.99 & 92.82±1.25 & 98.79±0.08 & 78.04±3.16 & 95.78±8.46 & 37.51±6.75 & 28.85±7.00 \\
    \midrule

    \multirow{3}{*}{DH-SNN} 
    & Acc & 97.92±0.49 & 95.66±1.96 & 98.58±0.15 & 56.64±1.34 & 98.00±6.00  & 65.40±4.87 & 47.73±1.35
 \\
    & Pre & 97.38±1.13 & 95.65±1.96 & 98.50±0.15 & 54.09±14.92 & 98.75±3.75 & 40.62±11.71 & 31.97±6.80
\\
    & F1  & 96.78±0.77 & 95.77±1.87 & 98.58±0.15 & 47.74±8.53 & 97.62±7.14 & 37.72±8.05 & 33.46±6.85 \\
    \midrule

    \multirow{3}{*}{Rhythm-SNN} 
    & Acc & 97.90±0.69 & 95.81±1.16 & 98.45±0.14 & 88.57±2.41 & 98.33±5.00 & 65.57±5.60 & 50.90±2.04 \\
    & Pre & 97.10±1.34 & 95.79±1.17 & 98.47±0.14 & 88.37±2.64 & \bf{99.00±3.00} & 41.20±11.72 & 43.38±7.75 \\
    & F1  & 96.78±1.18 & 95.88±1.18 & 98.45±0.14 & 88.23±2.70 & 97.78±6.67
 & 36.72±9.06 & 39.75±2.36 \\
    \midrule
    \midrule
\multirow{3}{*}{\textbf{\modelname{}}} 
& Acc & \textbf{98.97±0.07}$^*$ & \textbf{96.98±0.03}$^*$ & \textbf{99.72±0.14}$^*$ & \textbf{88.72±2.09} & \textbf{98.33±5.00} & \textbf{67.62±5.21} & \textbf{64.65±1.79}$^*$ \\
& Pre & \textbf{98.43±0.05}$^*$ & \textbf{97.12±0.03}$^*$ & \textbf{99.72±0.14}$^*$ & \textbf{88.54±2.11} & \underline{{98.75±3.75}} & \textbf{49.30±8.83}$^*$ & \textbf{63.11±4.35}$^*$ \\
& F1  & \textbf{98.19±0.03}$^*$ & \textbf{96.96±0.03}$^*$ & \textbf{99.72±0.14}$^*$ & \textbf{88.48±2.26}$^*$ & \textbf{98.29±5.14}$^*$ & \textbf{41.51±7.19}$^*$ & \textbf{59.35±2.43}$^*$ \\
    \bottomrule
    \end{tabular}
    }
\end{minipage}
\hfill
\begin{minipage}[c]{0.31\textwidth}
    \centering
\includegraphics[width=0.96\textwidth]{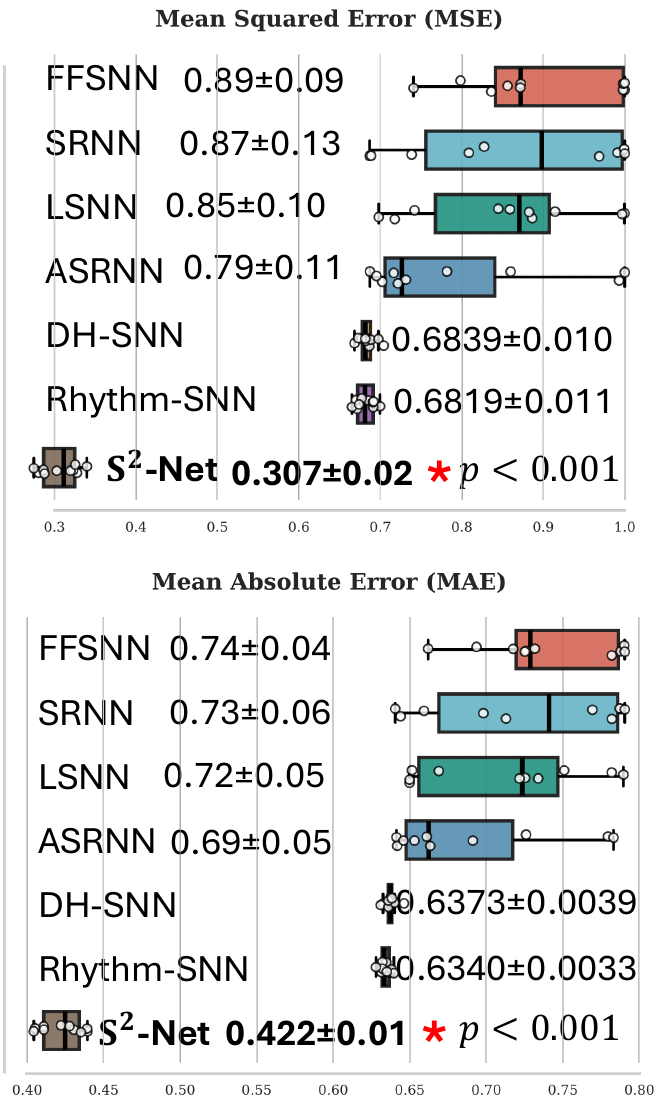}
    \vskip -0.1in
    \captionof{figure}{Brain decoding performance on HCP-WM dataset.}
    \label{fig:recon}
\end{minipage}
\vskip -0.15in
\end{table*}

\textbf{Firing dynamics across neurodegenerative conditions.} Moving beyond classification performance, we analyze the output spike ($\delta(t)$) to decode the latent spatiotemporal spiking patterns associated with each pathology.
In the ADNI cohort (Fig. \ref{fig:spike}, left), the AD group exhibited a consistently elevated mean firing rate compared to CN, suggesting a global state of network hyperexcitability. This aligns with established evidence that amyloid-$\beta$ accumulation disrupts inhibitory interneurons, leading to aberrant disinhibition in early disease stages \citep{palop2016network}. The recognized brain regions highlight that this hyperactivity is topologically concentrated in the medial temporal lobe and posterior hubs, corroborating the distinct, hypersynchronous spiking events observed in the raster plots. In contrast, the NIFD cohort (Fig. \ref{fig:spike}, middle) showes an overlapping global firing profile between FTD and CN. \modelname{} identified focal activation clusters in the frontal and temporal regions. This suggests that the signal loss from focal atrophy is dynamically counterbalanced by local circuit hyperexcitability in spared perilesional tissues, resulting in a net-neutral global output despite severe local pathology \citep{benussi2017transcranial}. PPMI cohort (Fig. \ref{fig:spike}, right) shows a rapid rise in firing rate, maintaining a higher steady-state amplitude than CN. Topological analysis revealed preferential recruitment of the basal ganglia and sensorimotor networks. The raster plots indicate sustained, high-frequency activity, which likely reflects the pathological oscillatory dynamics (e.g., beta-band hypersynchrony) characteristic of dopamine-depleted basal ganglia-thalamocortical loops \citep{ mcgregor2019circuit}. More detailed results are shown in Fig. \ref{fig:spike2} of Appendix.

\begin{figure*}
\centering
\includegraphics[width=0.99\linewidth, trim=0 1 0 0, clip]{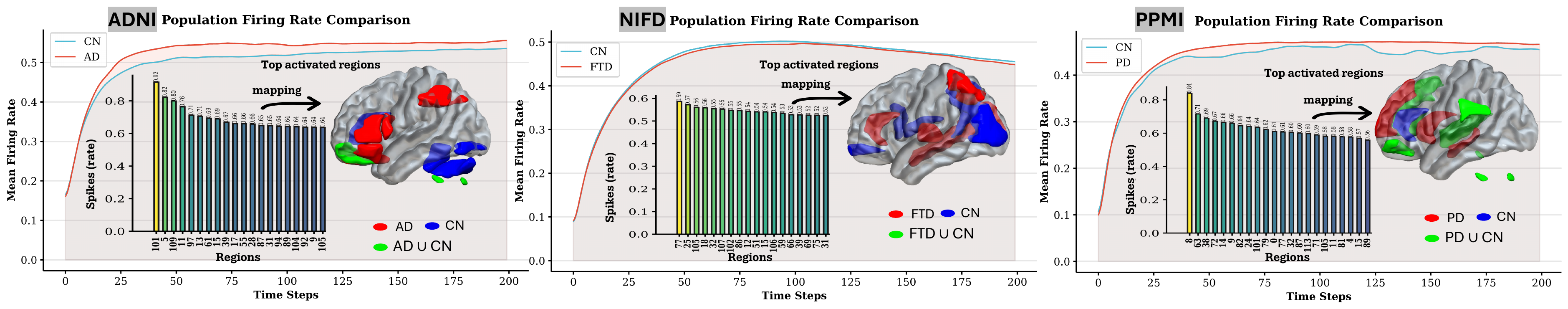}
\vskip -0.10in
\caption{\small Spatiotemporal spiking representations and disease-specific topological region identification. There are three dataset-specific columns: ADNI, NIFD, and PPMI, illustrating a top-down visual logic from macro-level dynamics to micro-level patterns: Top Row. Mean firing rate curves for healthy control (CN, blue) and disease (red) groups. Bottom Rows. Ranked brain region bar charts based on classification contribution (quantified by spiking intensity (rates)) and their corresponding anatomical mapping to 3D brain projections. }
\vskip -0.1in
\label{fig:spike}
\end{figure*}
\subsection{Versatility in General Temporal Processing}
Having demonstrated the framework's superior performance in interpreting complex brain dynamics, we further validate its generalization capability on standard temporal processing benchmarks lacking graph structures.
We adapt the \modelname{} to \textit{learn} the underlying connectivity directly from the data. Specifically, we treat the adjacency matrix (and the associated transport lags) as learnable latent parameters, allowing the model to self-organize the optimal synchronization topology for each task (Appendix \ref{app:stability}).
We conduct several experiments across five widely used benchmarks: Sequential MNIST (S-MNIST) \citep{le2015simple}, Permuted Sequential-MNIST (PS-MNIST) \citep{le2015simple}, Spiking Heidelberg Digits (SHD) \citep{cramer2020heidelberg}, Electrocardiogram (ECG) classification \citep{yin2021accurate}, and Google Speech Commands (GSC) \citep{warden2018speech} (Appendix \ref{dataset}). 

\begin{wraptable}{l}{0.45\linewidth} 
    \centering
    \caption{Performance comparison (accuracy \%) across datasets.}
    \label{tab:rhythm_snn_comparison_horizontal}
    \setlength{\tabcolsep}{2pt} 
    \renewcommand{\arraystretch}{1.05}
    \resizebox{1\linewidth}{!}{%
    \begin{tabular}{l c c c c c}
    \toprule
    \textbf{Model} & \textbf{S-M} & \textbf{PS-M} & \textbf{SHD} & \textbf{ECG} & \textbf{GSC} \\
    \midrule
    FFSNN      & 59.24 & 41.84 & 48.10 & 77.64 & 90.56 \\
    SRNN       & 81.51 & 64.89 & 81.60 & 71.56 & 91.76 \\
    LSNN       & 96.24 & 87.20 & 89.84 & 81.93 & 91.20 \\
    ASRNN      & 98.70 & 94.30 & 82.82 & 85.90 & 92.10 \\
    DH-SNN     & 98.90 & 94.52 & 89.84 & 86.35 & 94.05 \\
    Rhythm-SNN & 98.94 & 96.73 & 91.07 & 86.41 & 94.47 \\
    \midrule
    \textbf{\modelname{}} & \textbf{98.97} & \textbf{96.94} & \textbf{91.39} & \textbf{89.23} & \textbf{94.58} \\
    \bottomrule
    \end{tabular}}
\end{wraptable}

\textbf{Discussion.} In the main text, we solely compare our framework against SNN-series methods to ensure a fair evaluation of neuromorphic efficiency and biological plausibility. However, in order to demonstrate the broad applicability and generalization of \modelname{} beyond the neuromorphic domain, we have conducted extensive experiments against a diverse range of general deep learning methods (e.g., LSTMs, Transformers, TCN, Mamba \citep{gu2023mamba}, and synchronization-based models, i.e., AKOrN \citep{miyato2025artificial}, GraphCON \citep{rusch2022graph}, KuramotoGNN \citep{nguyen2024coupled} and SR-Net \citep{harikrishnan2021noise}). Please refer to Appendix \ref{real_app} for these detailed comparisons and full experimental results. The detailed ablation studies are shown in \ref{ablation_study}.

Table \ref{tab:rhythm_snn_comparison_horizontal} highlights the framework's robust generalization. First, validating \textit{temporal binding}, the model achieves 96.94\% on the challenging PS-MNIST task. This indicates a superior ability to solve the binding problem over long horizons. Second, in terms of \textit{energy-efficient signal processing}, the model sets a new SOTA on ECG (89.23\%) and remains highly competitive on the sparse neuromorphic SHD dataset. Finally, for \textit{semantic reasoning}, it achieves leading accuracy on GSC (94.58\%). These findings validate that the proposed mechanisms generalize beyond their original scope of brain dynamics to diverse data modalities, thereby highlighting that the principles governing brain dynamics can be translated to general-purpose computation.
\section{Conclusion}
We presented the spiking-by-synchronization neural network (\modelname{}), a biologically plausible architecture that unifies micro-scale spiking dynamics with macro-scale oscillatory synchronization. 
By modeling the brain's iterative bottom-up integration and top-down modulation via synchronization with phase lag, \modelname{} moves beyond simple rate coding to leverage the computational power of precise rhythmic timing. 
Extensive experiments show \modelname{}'s superiority across six human brain datasets and generalize robustly to standard temporal benchmarks, achieving SOTA performance. These results indicate that incorporating transient oscillatory mechanisms not only enhances biological realism for neuroscientific modeling but also provides a powerful control mechanism for efficient spatiotemporal information processing in broad neuromorphic applications.

\newpage
\appendix
\onecolumn
\section{Appendix}

\subsection{Dataset Descriptions}
\label{dataset}

In this section, we describe the datasets used to evaluate the proposed model across brain state recognition, disease diagnosis, temporal, auditory, neuromorphic, and sequence modeling tasks.


\textbf{1. The Lifespan Human Connectome Project Aging (HCP-A) dataset} \citep{bookheimer2019lifespan}.
The HCP-A dataset provides a comprehensive characterization of brain functional and structural organization across the aging process and is widely used for task recognition studies. It consists of data from 717 subjects, including both task-based fMRI scans (4,846 time series in total) and diffusion-weighted imaging (DWI) scans (717 in total). The dataset includes four task conditions related to memory and sensorimotor processing: VISMOTOR, CARIT, FACENAME, and Resting State. Each fMRI scan contains 300 time points. In our experiments, these task conditions are treated as distinct categories in a four-class classification problem.

\textbf{2. The Human Connectome Project – Young Adults (HCP-YA) dataset} \citep{van2013wu}.
The HCP-YA dataset contains high-quality task-based and resting-state fMRI data from 1,200 healthy young adults (each includes two scans). It includes seven cognitive tasks: Motor, Relational, Social, Working Memory, Language, Emotion, and Gambling. Each task-based fMRI scan consists of 175 time points. In addition, the Working Memory task (denoted as HCP-WM) alternates between 2-back and 0-back conditions with body, place, face, and tool stimuli, interleaved with fixation periods. A resting-state period follows every two task blocks, resulting in fMRI scans with 405 time points. Data preprocessing follows the pipeline described in \citep{dan2024exploring}.
For both HCP-A and HCP-YA datasets, the brain is parcellated into 116 regions using the AAL atlas \citep{tzourio2002automated}. Structural connectivity (SC) is represented as a $116 \times 116$ matrix, where each entry denotes the number of reconstructed white-matter fibers between two regions, normalized by the total fiber count of each subject. Functional connectivity (FC) is computed using Pearson correlation between regional BOLD time series.
In our experiments, the HCP-YA tasks are treated as a seven-class classification problem, while the HCP-WM conditions form a nine-class classification problem. We report results using 10-fold cross-validation.

\textbf{3. UK Biobank (UKB) dataset} \citep{miller2016multimodal}.
The UK Biobank is a large-scale population-based cohort that provides multimodal neuroimaging data from 16,240 participants. We use task-based fMRI data from the UKB imaging subset for task recognition. The dataset includes multiple cognitive paradigms designed to probe visual, motor, and higher-order cognitive functions. Each scan contains temporally aligned BOLD time series acquired under standardized imaging protocols. Similar to the HCP datasets, FC is computed from regional BOLD signals, and SC is derived when diffusion data are available. The UKB tasks are treated as distinct classes in a multi-class classification setting, enabling evaluation of model generalization on a large and heterogeneous population.


\textbf{4. Alzheimer’s Disease Neuroimaging Initiative (ADNI) dataset} \citep{jack2008alzheimer}.
The ADNI dataset is a longitudinal, multi-site study aimed at understanding the progression of Alzheimer’s disease (AD). We use resting-state fMRI and diffusion MRI data from 135 subjects spanning cognitively normal (CN), mild cognitive impairment (MCI), and Alzheimer’s disease (AD) groups. Functional connectivity is constructed from resting-state BOLD signals, while structural connectivity is estimated from diffusion tractography. The diagnostic labels are used to formulate a disease classification task, assessing the ability of the model to distinguish neurodegenerative stages.

\textbf{5. Parkinson’s Progression Markers Initiative (PPMI) dataset} \citep{marek2011parkinson}.
The PPMI dataset focuses on Parkinson’s disease (PD) and related disorders, providing multimodal imaging data including resting-state fMRI and diffusion MRI. A multi-center study that collects neuroimaging data from 173 subjects, including individuals with Parkinson’s disease (PD), scans without evidence of dopaminergic deficit (SWEDD), prodromal PD, and CN. Similar preprocessing and connectivity construction procedures are applied as in ADNI. The resulting FC and SC representations are used for binary disease diagnosis, evaluating the sensitivity of the proposed method to Parkinson’s-related network alterations.

\textbf{6. Neuroimaging in Frontotemporal Dementia (NIFD) dataset} \citep{planche2023anatomical}.
This dataset focuses on frontotemporal dementia (FTD) and includes resting-state fMRI data from 1,010 subjects. Participants are categorized into CN, logopenic variant of primary progressive aphasia (LVPPA), behavioral variant frontotemporal dementia (BV), progressive non-fluent aphasia (PNFA), and semantic variant (SV) groups. We use resting-state fMRI data to construct functional brain networks, capturing disease-related disruptions in large-scale connectivity patterns. The classification task is formulated as FTD versus control, providing an additional benchmark for evaluating model robustness across different neurodegenerative diseases.

\textbf{7. S-MNIST and PS-MNIST \citep{le2015simple}.}
Both datasets are derived from the original MNIST digit recognition benchmark by converting
each $28\times 28$ image into a temporal sequence of length 784 via raster scanning.  
For S-MNIST, pixels are fed in natural scan order, while PS-MNIST applies a fixed random permutation
to remove spatial locality. Pixel intensities are injected as input current to the encoding layer,
transforming non-spiking inputs into spike-based sequences for downstream SNN processing.

\textbf{8. SHD \citep{cramer2020heidelberg}.}
The Heidelberg Digits dataset contains $\sim$10,000 audio recordings of spoken digits (0–9)
in English and German, captured from 12 speakers. Each speaker utters each digit $\sim$40 times
across both languages. Audio is transformed into spike trains using a biologically inspired cochlea model.
Following standard practice, each sequence is represented as 1000 time steps.  
The dataset is split into 8156 training and 2264 test samples.

\textbf{9. ECG \citep{yin2021accurate}.}
This dataset consists of six types of ECG waveforms (P, PQ, QR, RS, ST, TP). We follow the
preprocessing procedure in prior work, applying a variant of the level-crossing encoding method to the
derivative of the normalized ECG signal. Each channel is converted into two spike trains capturing
value-increasing and value-decreasing events, respectively.

\textbf{10. GSC \citep{warden2018speech}.}
The Google Speech Commands dataset (GSC) contains 64,727 utterances from 1881 speakers.
Following standard configuration, we use 12 classes selected from the 35 available labels:
“Yes”, “No”, “Up”, “Down”, “Left”, “Right”, “On”, “Off”, “Stop”, “Go”, along with “Unknown” and “Silence”.
Audio preprocessing follows prior work, extracting log Mel filterbank features and derivatives. Unlike low-level signal classification, recognizing these specific commands requires mapping complex spatiotemporal patterns to unified semantic representations. Thus, GSC serves as a benchmark for \textit{semantic reasoning}, testing the framework's ability to decode communicative intent from raw sensory streams.

\subsection{Energy Interpretation and Lyapunov Stability Analysis}
\label{app:stability}
To theoretically validate the stability of our proposed graph-aware vector Kuramoto dynamics, we analyze the system through the lens of potential energy landscapes. Although the full dynamics (Eq.~\ref{eq:vector-kuramoto}) include a drift term $\omega_{i,k}$, the synchronization behavior is primarily governed by the interaction and forcing terms. We demonstrate that these terms can be derived as the gradient descent of a global Lyapunov function $V({\theta}; t)$.

\begin{theorem}
    Under the assumption of symmetric structural coupling ($A_{ij}=A_{ji}$) and antisymmetric phase lags ($\alpha_{ij}=-\alpha_{ji}$), the interaction and sensory drive components of the update rule follow the negative gradient of the following scalar potential field:
\begin{equation}
\begin{split} 
V({\theta}; t)
=
& \underbrace{-\frac{K}{2N}\sum_{d=1}^{D}\sum_{i=1}^{N}\sum_{j=1}^{N} A_{ij}\, \cos\bigl(\theta_{j,d} - \theta_{i,d} - \alpha_{ij}\bigr)}_{{V}_{\mathrm{sync}}: \text{ System dynamics}} \\
& \underbrace{- \sum_{d=1}^{D}\sum_{i=1}^{N}\kappa_{i,d} \cos\bigl(\gamma_i(t) - \theta_{i,d}\bigr)}_{{V}_{\mathrm{data}}: \text{ System control}}.
\end{split}
\label{eq:lyapunov2}
\raisetag{4em} 
\end{equation}
\end{theorem}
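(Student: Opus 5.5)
The plan is a direct computation: differentiate the scalar field $V(\theta;t)$ in Eq.~(\ref{eq:lyapunov2}) with respect to a single coordinate $\theta_{k,d}$, treating $t$ (and hence $\gamma_k(t)$, $\alpha_{ij}(t)$) as frozen. We then check that $-\partial V/\partial\theta_{k,d}$ reproduces exactly the coupling and forcing terms on the right-hand side of Eq.~(\ref{eq:vector-kuramoto}). Since the components $d=1,\dots,D$ do not interact in either $V$ or the dynamics, we fix $d$ and drop it from the notation. The two pieces $V_{\mathrm{data}}$ and $V_{\mathrm{sync}}$ are handled separately.

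The forcing term is immediate. Only the summand $i=k$ depends on $\theta_k$, and $\frac{\partial}{\partial\theta_k}\cos(\gamma_k-\theta_k)=\sin(\gamma_k-\theta_k)$. Hence $-\partial V_{\mathrm{data}}/\partial\theta_k=\kappa_k\sin(\gamma_k(t)-\theta_k)$, which is the sensory drive term.

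For $V_{\mathrm{sync}}$, $\theta_k$ appears in the double sum in two families of summands: those with $i=k$ (index $j$ free) and those with $j=k$ (index $i$ free). The diagonal $i=j=k$ contributes nothing, since the argument $-\alpha_{kk}=0$ is constant.
\begin{itemize}[noitemsep, topsep=0pt]
\item For $i=k$, the argument $\theta_j-\theta_k-\alpha_{kj}$ has $\theta_k$-derivative $-1$. These summands give $\sum_j A_{kj}\sin(\theta_j-\theta_k-\alpha_{kj})$.
\item For $j=k$, the argument $\theta_k-\theta_i-\alpha_{ik}$ has derivative $+1$. These summands give $-\sum_i A_{ik}\sin(\theta_k-\theta_i-\alpha_{ik})$.
\end{itemize}
The key step is to rewrite the second family using the two hypotheses: $A_{ik}=A_{ki}$ and $-\alpha_{ik}=\alpha_{ki}$. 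This gives $-\sin(\theta_k-\theta_i+\alpha_{ki})=\sin(\theta_i-\theta_k-\alpha_{ki})$, so after renaming $i\to j$ the second family coincides term by term with the first. The two families therefore add to $2\sum_j A_{kj}\sin(\theta_j-\theta_k-\alpha_{kj})$. Multiplying by the prefactor $-K/(2N)$ and negating yields
\begin{equation*}
-\frac{\partial V_{\mathrm{sync}}}{\partial\theta_k}=\frac{K}{N}\sum_{j}A_{kj}\sin(\theta_j-\theta_k-\alpha_{kj}),
\end{equation*}
which is the Sakaguchi–Kuramoto coupling term. The factor $\tfrac12$ in $V_{\mathrm{sync}}$ exists precisely to absorb this double counting.

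I expect the main obstacle to be this symmetrization step: getting the signs right when the $j=k$ summands are re-indexed, and seeing that antisymmetry of $\alpha$ is exactly what is required. Without it, the two families would carry $\alpha_{kj}$ and $-\alpha_{jk}$ with opposite signs and would not merge, so the lagged coupling would not be a gradient.

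As a corollary, I would note that for the gradient part $\dot\theta=-\nabla_\theta V$ one has $\frac{dV}{dt}=-\|\nabla_\theta V\|^2+\partial_t V$. This gives $\dot V\le 0$ only when the controls $\gamma,\alpha$ are held fixed and the drift $\omega$ is removed, either by setting it to zero or by passing to a co-rotating frame when the $\omega_k$ are identical. These caveats should be stated explicitly, since the theorem itself asserts only the gradient structure and not monotone decay.
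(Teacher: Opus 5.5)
Your computation is correct and is the same argument as the paper's: you differentiate $V$ coordinatewise and use $A_{ij}=A_{ji}$ and $\alpha_{ij}=-\alpha_{ji}$ to merge the $i=k$ and $j=k$ summands, which absorbs the factor $\tfrac12$. You simply make explicit the sign bookkeeping that the paper compresses into the word ``simplifies.''

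Your caveat on $\dot V\le 0$ is more careful than the paper, which silently drops $\omega$. However, the co-rotating-frame escape fails when $\kappa\neq 0$. The reason is that $V_{\mathrm{data}}$ is not invariant under uniform phase shifts: in that frame the target becomes $\gamma_k-\omega t$. So monotone decay needs $\omega=0$ or $\kappa=0$, in addition to frozen $\gamma$ and $\alpha$.
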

\begin{proof}
    Taking the partial derivative of $V$ with respect to the $k$-th phase component of node $i$, denoted $\theta_{i,d}$, we obtain:
\begin{equation}
\begin{aligned}
&-\frac{\partial V}{\partial \theta_{i,d}}
= \frac{\partial}{\partial \theta_{i,d}} \left[ \sum_{m=1}^{N}\kappa_{m,d} \cos(\gamma_m - \theta_{m,d}) \right] \\
& + \frac{K}{2N} \frac{\partial}{\partial \theta_{i,d}} \left[ \sum_{m,n} A_{mn} \cos(\theta_{n,d} - \theta_{m,d} - \alpha_{mn}) \right].
\end{aligned}
\end{equation}
Leveraging the symmetry of $A$ and antisymmetry of $\alpha$, the interaction derivative simplifies, yielding:
\begin{equation}
 -\frac{\partial V}{\partial \theta_{i,d}} = \kappa_{i,d}\sin(\gamma_i(t) - \theta_{i,d}) + \frac{K}{N}\sum_{j=1}^{N} A_{ij} \sin(\theta_{j,d} - \theta_{i,d} - \alpha_{ij}).
\end{equation}
This result exactly matches the update terms in our dynamic equation (Eq.~\ref{eq:vector-kuramoto}), confirming that the system trajectory satisfies $\dot{\theta}_{i,d} \propto -\nabla_{\theta_i} V$. Consequently, in the limit where the sensory drive $\gamma(t)$ varies slowly relative to the relaxation time of the oscillator network, the total energy decreases monotonically along the system's trajectory:
\begin{equation}
\frac{dV}{dt} = \sum_{i,d} \frac{\partial V}{\partial \theta_{i,d}} \dot{\theta}_{i,d} = - \sum_{i,d} \left( \frac{\partial V}{\partial \theta_{i,d}} \right)^2 \leq 0,
\label{dv}
\end{equation}
ensuring convergence to a stable phase-locked configuration.
\end{proof}

\textit{Physical Interpretation.} The potential $V$ embodies a fundamental competition between two operational objectives:

\textbf{Internal Coherence (${V}_{\mathrm{sync}}$):} The first term forms a ``topological prior.'' It is minimized when the phase difference $\Delta \theta_{ij}$ between connected regions aligns with the anatomically prescribed lag $\alpha_{ij}$. This forces the latent state to respect the brain's structural geometry, smoothing out noise and ensuring signal propagation follows valid white-matter pathways.

\textbf{Data Consistency (${V}_{\mathrm{data}}$):} The second term acts as a ``data anchor.'' Modeled as a sinusoidal potential well centered at $\gamma_i(t)$, it pulls the oscillator state toward the current sensory observation. The stiffness parameter $\kappa_{i,k}$ determines the confidence in the observation.

By minimizing this composite energy, the network dynamically finds the optimal trade-off, adhering to the brain's structural constraints while faithfully tracking the real-time brain dynamics.

\textbf{Remark: Learnable Topology and Physical Interpretation.} 
For general temporal datasets lacking predefined graphs, we treat the adjacency matrix $A \in \mathbb{R}^{N \times N}$, which governs the coupling strength between $N$ hidden neurons, as a learnable parameter initialized from a random distribution. Importantly, to preserve the theoretical guarantees of Lyapunov stability derived in Eq. (\ref{dv}), we enforce a symmetric constraint (i.e., $A_{ij} = A_{ji}$) during the forward pass. This is implemented by parameterizing the underlying weight matrix $W$ and computing $A = \sigma(W + W^\top)/2$, where $\sigma$ denotes the sigmoid function for normalization. 
This formulation ensures that the network self-organizes \textit{directed} traveling waves, essential for encoding temporal order, while strictly maintaining the gradient flow structure of the energy landscape. Through end-to-end backpropagation, the network self-organizes its topology, dynamically enhancing connections that correspond to significant temporal dependencies while suppressing irrelevant ones. This mechanism allows \modelname{} to discover a latent topology tailored to the input domain:
\begin{itemize}
    \item For \textbf{Visual tasks (e.g., S-MNIST, PS-MINST)}, the learned matrix reconstructs the latent \textit{spatial geometry}, effectively mapping implicit temporal delays (e.g., vertical pixel neighbors) into explicit structural couplings for synchronization.
    \item For \textbf{Audio/Speech tasks (e.g., SHD, GSC)}, the learned matrix captures \textit{spectro-temporal binding} and \textit{semantic synergy}. By learning symmetric couplings, the network identifies which frequency bands tend to co-activate (e.g., the harmonic structure of vowels), facilitating the integration of distributed acoustic cues into coherent phonemes or semantic commands.
\end{itemize}

The detailed interpretation of the involved datasets is summarized in Table \ref{tab:physical_meaning_A}.

\begin{table}[!ht]
    \centering
    \caption{Physical interpretation of the adjacency matrix $A$ and the role of synchronization across different benchmarks.}
    \label{tab:physical_meaning_A}
    \renewcommand{\arraystretch}{1.3} 
    \begin{tabularx}{\textwidth}{l l l X} 
        \toprule
        \textbf{Dataset} & \textbf{Input Domain} & \textbf{Physical Meaning of $A$} & \textbf{Function (Role of Synchronization)} \\
        \midrule
        \textbf{S/PS-MNIST} & Visual (Flattened) & {Spatial Geometry} & Reconstructs latent 2D spatial shapes from flattened 1D temporal sequences. \\
        \textbf{SHD} & Auditory (Spikes) & {Tonotopic Binding} & Integrates harmonics across frequency bands to form coherent auditory objects. \\
        \textbf{ECG} & Physiological (1D) & {Phase/Morphology} & Tracks the cyclic phase transitions of heartbeats and captures waveform morphology. \\
        \textbf{GSC} & Speech (Spectrogram) & {Semantic Synergy} & Binds distributed acoustic cues into unified phonemes and semantic words. \\
        \bottomrule
    \end{tabularx}
\end{table}

In parallel to learning the topology $A$, we also explicitly model the phase lag $\alpha_{ij}$ to capture directed temporal dynamics. 
To strictly adhere to the Lyapunov stability condition (which requires $\alpha_{ij} = -\alpha_{ji}$), we parameterize the directional bias as a learnable antisymmetric matrix. 
Specifically, we initialize a raw parameter matrix $W_\alpha$ and compute the directed component as $B =\text{tanh} (W_\alpha - W_\alpha^\top)$ to ensure bounded directionality. 
This learnable direction is then modulated by the magnitude of the learned connectivity $A$ (via the transport cost $C_{ij} \propto 1/A_{ij}$), resulting in a composite phase lag $\alpha_{ij} \propto B_{ij} \cdot C_{ij}$. 

\subsection{Deploy \modelname{} in Downstream Applications}\label{subsec:model}

We utilize three distinct heads to drive feature representation learning.

\textit{(1) Self-Supervised Reconstruction.} To preserve intrinsic information within the hidden representations, a decoder $f_{\mathrm{dec}}$ acts as a regularizer by reconstructing the original input from the hidden membrane potential $U(t)$ (representing sub-threshold dynamics):
\begin{equation}
\label{eq:recon_loss}
\hat{X}_t = f_{\mathrm{dec}}(U(t)), \quad \mathcal{L}_{\mathrm{rec}} = \sum_{t} \| X(t) - \hat{X}(t) \|_2^2 + \lambda \sum_{t} \| \hat{X}(t) - \hat{X}(t-1) \|_2^2
\end{equation}
where $\lambda$ controls the strength of the smoothness regularization.

\textit{(2) Dynamic State Recognition.} For the sequence labeling task, we require a prediction for state status $\hat{y}^{\mathrm{(state)}}(t)$ for every time step $t$. The accumulated membrane potential $\mathcal{H}(t)$ of the readout layer is mapped to class probabilities via the projection $f^1_{\mathrm{cls}}$:
\begin{equation}
\hat{y}^{(\mathrm{state})}(t) =f^1_{\mathrm{cls}}(\mathcal{H}(t)), \quad \hat{y}(t) \in \mathbb{R}^{C \times T},
\end{equation}
where $C$ is the number of dynamic states. 

\textit{(3) Classification.} For subject-level classification, we aggregate temporal information by pooling the accumulated logits $\mathcal{H}(t)$ across the sequence length $T$ and predict the label $\hat{y}^{(\mathrm{cls})}$ through project $f_{\mathrm{cls}}^2$:
\begin{equation}
\hat{y}^{(\mathrm{cls})} = f^2_{\mathrm{cls}}\left( \frac{1}{T}\sum_{t=1}^{T} \mathcal{H}(t) \right).
\end{equation}




\subsection{Additional Experimental Results
}
\subsubsection{Simulated Experiments}
\label{sim_app}
\textbf{Experimental setup.} To validate \textit{binding-by-synchrony} and \textit{traveling-wave propagation}, we constructed two synthetic environments (Fig. \ref{fig:sim}): (a) disjoint squares and (b) nested topological structures\footnote{In computational neuroscience, nested structures are regarded as a canonical benchmark \citep{roelfsema1998object,chen1982topological} for figure-ground segregation. They challenge the system to solve the 'binding problem' by integrating spatially separated segments into a coherent whole.}. Objects are initialized with distinct intensities ($1.0$ vs. $0.5$) to enable feature-driven repulsion via the graph coupling term $A_{ij}$. We applied Gaussian perturbation at varying levels: a baseline $\sigma=0.05$ for (a), and a severe noise regime $\sigma=0.2$ for (b) to test robustness under low signal-to-noise ratio.

\begin{figure}
\centering
\includegraphics[width=1\linewidth, trim=0 0 0in 0in, clip]{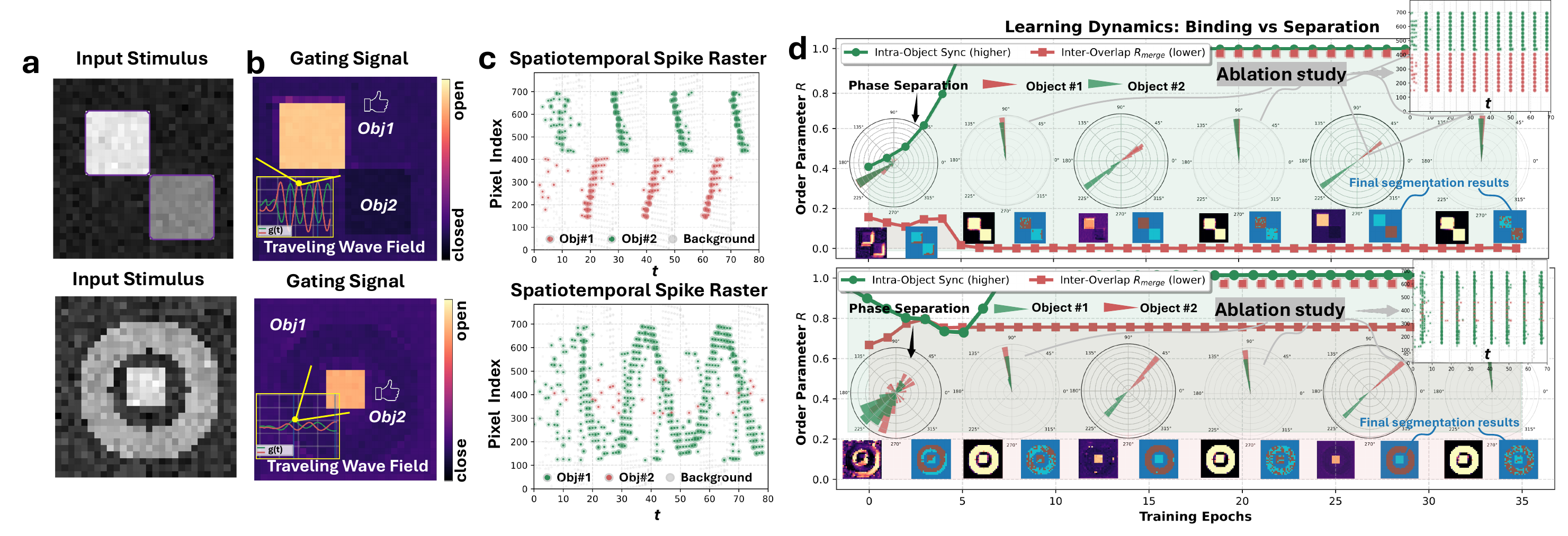}
\vskip -0.13in
\caption{\small Validation of the power of \modelname{} framework. (a) The input stimulus: disjoint object discovery (top) and nested topological binding (bottom). (b) The gating signal shows Object \#1 or (\#2) in an ``Open" phase (bright) while Object \#2 or (\#1) is ``Closed" (dark). Corresponding gating dynamics manifest as alternating sine waves, mathematically reproducing human attention rhythms (e.g., Theta oscillations). (c) Spatiotemporal spike raster: the framework leverages graph-structured delays to generate a traveling wave field, maintaining distinct temporal representations despite spatial entanglement. (d) Evolutionary curves of the order parameter $R$ quantify the learning process. Insets visualize the transition from initial phase chaos to stable separation (polar plots) and the final segmentation results.}
\label{fig:sim}
\end{figure}

\textbf{Evaluation metrics.}
We quantify binding quality and separation efficacy using Kuramoto-derived metrics (i.e., Kuramoto order parameter, KOP). Let $\mathcal{O}_q$ be the set of $M_q$ pixels for object $q$.
\textit{Intra-Object Synchronization ($R_{intra}$)} measures internal coherence: $R_{intra} = \frac{1}{Q} \sum_{q=1}^{Q} \left| \frac{1}{M_q} \sum_{j \in \mathcal{O}_q} e^{i\theta_j} \right|$. A value near $1.0$ indicates perfect binding.
\textit{Inter-Object Overlap ($R_{merge}$)} assesses desynchronization by treating the union of objects as a single population: $R_{merge} = \left| \frac{1}{\sum_{q=1}^Q M_q} \sum_{j \in \bigcup \mathcal{O}_q} e^{i\theta_j} \right|$. Ideally, $R_{merge} \to 0$ represents successful phase separation.

\textbf{Discrete object discovery and attention rhythms (Fig. \ref{fig:sim}-top)}
In the disjoint task (two objects shown in Fig. \ref{fig:sim}a), our \modelname{} breaks phase symmetry, driving the system from a disordered transient state into a stable anti-phase regime. Crucially, the gating dynamics $g(t)$ (Fig. \ref{fig:sim}b) emerge as smooth, alternating sinusoidal waves. This dynamic creates a ``push-pull" mechanism that mathematically replicates biological attention rhythms (e.g., Theta oscillations). The resulting spatiotemporal spike raster (Fig. \ref{fig:sim}c) reveals temporal chunking, where objects are phase-locked to alternating firing windows separated by refractory gaps.  As shown in Fig. \ref{fig:sim}d, $R_{intra}$ rapidly converges to $\approx 1.0$ while $R_{merge}$ collapses to $\approx 0.0$, confirming robust segregation.

\textbf{Topological complexity and perceptual bistability (Fig. \ref{fig:sim}-bottom).} Standard coupled oscillators often collapse into in-phase synchronization or a winner-take-all regime where the larger ring suppresses the smaller square (Fig. \ref{fig:sim}a). In contrast, \modelname{} achieves perceptual bistability by leveraging graph-aware traveling waves. As shown in the gating signal map (Fig. \ref{fig:sim}b), a phase gradient propagates along the ring, binding its spatially disjoint segments into a unified entity. This mechanism ensures balanced strength, despite the size imbalance, the smaller square is not suppressed but competes equally for representation. The spike raster (Fig. \ref{fig:sim}c) reveals a traveling wave topology, where the square fires precisely within the local phase troughs of the ring's propagation, resolving the topological conflict through autonomous temporal multiplexing.

\textbf{Ablation and dynamical stability}.
Upon turning off the graph-aware coupling and phase lag, the system immediately undergoes global synchronization collapse, with both $R_{intra}$ and $R_{merge}$ converging to $1.0$. This confirms that phase separation is not a trivial byproduct of the architecture but an emergent property of the oscillatory gating dynamics.

\subsubsection{Extended Baselines Experiments}
\label{real_app}
To evaluate the versatility and broad applicability of our proposed \modelname{} beyond the neuromorphic computing domain, we extend our comparative analysis to include a diverse suite of general-purpose and physics-inspired deep learning architectures. 

Specifically, we benchmark against three primary categories of advanced models:
\begin{itemize}
    \item \textbf{Standard Deep Sequence Models:} Transformer, LSTM, and Temporal Convolutional Networks (TCN).
    \item \textbf{State-Space Models:} Mamba, representing recent advancements in efficient long-range dependency modeling.
    \item \textbf{Physics-Inspired and Synchronization Baselines:} AKOrN, GraphCON, KuramotoGNN, and stochastic resonance (SR)-Net, representing state-of-the-art continuous-time and oscillator-based neural networks.
\end{itemize}

To ensure a rigorous and fair comparison, all baseline models are subjected to identical lightweight resource constraints: a hidden dimension of 64, a batch size of 16, and a maximum of 500 training epochs. We report the 10-fold cross-validation performance measured by Accuracy, Precision, and F1-score across six diverse neuroimaging datasets (HCPA, HCPYA, UKB, ADNI, PPMI, and NIFD).

As detailed in Table \ref{tab:expanded_baselines}, \modelname{} consistently outperforms both traditional sequence models and advanced synchronization-based networks across the majority of benchmarks. Notably, while standard architectures like Transformers and LSTMs suffer severe performance degradation on challenging or highly imbalanced datasets (e.g., ADNI and NIFD), our framework maintains robust predictive power. This superior performance underscores the efficacy of our time-delayed oscillatory modulation in extracting stable temporal representations under stringent resource constraints.

\begin{table}[ht]
\centering
\caption{Performance on eight models. 10-fold cross-validation results are reported for Accuracy, Precision, and F1-score (\%) across six brain-network datasets.}
\label{tab:expanded_baselines}
\resizebox{\linewidth}{!}{%
\begin{tabular}{llcccccc}
\toprule
\textbf{Model} & \textbf{Metric} & \textbf{HCPA} & \textbf{HCPYA} & \textbf{UKB} & \textbf{ADNI} & \textbf{PPMI} & \textbf{NIFD} \\
\midrule
\multirow{3}{*}{Transformer} 
& Acc & 96.89$\pm$0.39 & 96.48$\pm$0.22 & 98.83$\pm$0.26 & 78.17$\pm$3.76 & 66.10$\pm$2.68 & 61.11$\pm$2.03 \\
& Pre & 96.14$\pm$0.67 & 96.46$\pm$0.27 & 96.87$\pm$0.26 & 71.72$\pm$18.89 & 45.01$\pm$5.16 & 57.45$\pm$4.21 \\
& F1  & 95.85$\pm$0.63 & 96.47$\pm$0.24 & 96.81$\pm$0.26 & 64.44$\pm$12.65 & 41.79$\pm$1.76 & 54.17$\pm$3.58 \\
\midrule
\multirow{3}{*}{LSTM} 
& Acc & 97.26$\pm$0.70 & 95.39$\pm$0.24 & 98.95$\pm$0.24 & 74.78$\pm$6.62 & 66.15$\pm$4.72 & 62.74$\pm$2.30 \\
& Pre & 96.27$\pm$1.34 & 95.27$\pm$0.24 & 97.03$\pm$0.25 & 67.72$\pm$14.87 & 45.38$\pm$3.72 & 58.51$\pm$3.42 \\
& F1  & 96.27$\pm$0.90 & 95.28$\pm$0.24 & 99.94$\pm$0.24 & 63.62$\pm$12.31 & 44.82$\pm$3.34 & 55.38$\pm$3.63 \\
\midrule
\multirow{3}{*}{TCN} 
& Acc & 97.84$\pm$0.48 & 96.65$\pm$0.23 & 96.01$\pm$0.23 & 76.07$\pm$5.78 & 67.22$\pm$5.28 & 63.12$\pm$3.39 \\
& Pre & 97.21$\pm$0.60 & 96.62$\pm$0.28 & 96.08$\pm$0.22 & 63.56$\pm$3.89 & 40.63$\pm$12.19 & 62.39$\pm$1.87 \\
& F1  & 97.12$\pm$0.67 & 96.62$\pm$0.27 & 96.00$\pm$0.22 & 59.22$\pm$15.25 & 40.75$\pm$9.33 & 54.35$\pm$3.77 \\
\midrule
\multirow{3}{*}{Mamba} 
& Acc & 98.27$\pm$0.10 & 96.19$\pm$0.14 & 96.97$\pm$0.24 & 70.54$\pm$6.58 & 67.49$\pm$1.69 & 63.46$\pm$2.94 \\
& Pre & 98.42$\pm$0.03 & 96.15$\pm$0.16 & 97.03$\pm$0.23 & 52.98$\pm$19.41 & 47.46$\pm$3.51 & 60.51$\pm$4.04 \\
& F1  & 97.72$\pm$0.20 & 96.13$\pm$0.14 & 96.96$\pm$0.23 & 51.77$\pm$13.06 & 44.86$\pm$1.09 & 56.38$\pm$4.10 \\
\midrule
\multirow{3}{*}{AKOrN} 
& Acc & 94.34$\pm$0.76 & 86.31$\pm$1.71 & 92.22$\pm$1.68 & 85.27$\pm$7.52 & 66.82$\pm$4.87 & 58.79$\pm$2.68 \\
& Pre & 94.41$\pm$0.85 & 86.61$\pm$1.81 & 93.15$\pm$1.20 & 70.24$\pm$11.73 & 46.86$\pm$10.79 & 58.74$\pm$3.16 \\
& F1  & 94.30$\pm$0.81 & 86.23$\pm$1.73 & 91.88$\pm$1.65 & 76.46$\pm$10.28 & 39.81$\pm$8.82 & 58.31$\pm$3.13 \\
\midrule
\multirow{3}{*}{GraphCON} 
& Acc & 93.72$\pm$0.59 & 73.93$\pm$2.71 & 56.89$\pm$1.16 & 84.89$\pm$5.03 & 61.34$\pm$2.31 & 49.97$\pm$1.25 \\
& Pre & 93.61$\pm$0.56 & 74.13$\pm$2.41 & 57.35$\pm$1.53 & 74.46$\pm$10.68 & 47.06$\pm$8.98 & 33.26$\pm$7.36 \\
& F1  & 93.53$\pm$0.57 & 73.67$\pm$2.51 & 56.26$\pm$1.41 & 73.34$\pm$8.13 & 41.19$\pm$5.04 & 37.89$\pm$4.32 \\
\midrule
\multirow{3}{*}{KuramotoGNN} 
& Acc & 86.81$\pm$1.35 & 60.23$\pm$3.77 & 37.23$\pm$4.97 & 72.97$\pm$6.45 & 57.75$\pm$4.09 & 47.43$\pm$0.79 \\
& Pre & 87.56$\pm$0.79 & 61.97$\pm$2.68 & 40.09$\pm$1.47 & 73.32$\pm$6.53 & 47.89$\pm$13.21 & 44.14$\pm$2.24 \\
& F1  & 85.47$\pm$1.33 & 59.97$\pm$3.46 & 36.23$\pm$4.46 & 73.90$\pm$5.83 & 41.21$\pm$9.18 & 43.87$\pm$2.79 \\
\midrule
\multirow{3}{*}{SR-Net} 
& Acc & 60.41$\pm$2.12 & 36.79$\pm$3.18 & 59.24$\pm$1.40 & 71.81$\pm$3.27 & 57.41$\pm$2.50 & 29.37$\pm$2.63 \\
& Pre & 45.45$\pm$2.85 & 36.45$\pm$3.14 & 58.60$\pm$1.57 & 41.00$\pm$15.90 & 38.42$\pm$11.65 & 23.79$\pm$3.66 \\
& F1  & 47.25$\pm$2.35 & 35.63$\pm$3.16 & 58.57$\pm$1.54 & 43.17$\pm$4.98 & 35.88$\pm$8.39 & 19.13$\pm$1.88 \\
\midrule
\multirow{3}{*}{\textbf{\modelname{}}} 
& Acc & \textbf{98.97$\pm$0.07} & \textbf{96.98$\pm$0.03} & \textbf{99.72$\pm$0.14} & \textbf{98.33$\pm$5.00} & \textbf{67.62$\pm$5.21} & \textbf{64.45$\pm$1.79} \\
& Pre & \textbf{98.43$\pm$0.05} & \textbf{97.12$\pm$0.03} & \textbf{99.72$\pm$0.14} & \textbf{98.75$\pm$3.75} & \textbf{49.30$\pm$8.83} & \textbf{63.11$\pm$4.35} \\
& F1  & \textbf{98.19$\pm$0.03} & \textbf{96.96$\pm$0.03} & \textbf{99.72$\pm$0.14} & \textbf{98.29$\pm$5.14} & {41.51$\pm$7.19} & \textbf{59.35$\pm$2.43} \\
\bottomrule
\end{tabular}%
}
\end{table}

\begin{figure*}
\centering
\includegraphics[width=0.99\linewidth, trim=0 1 0 0, clip]{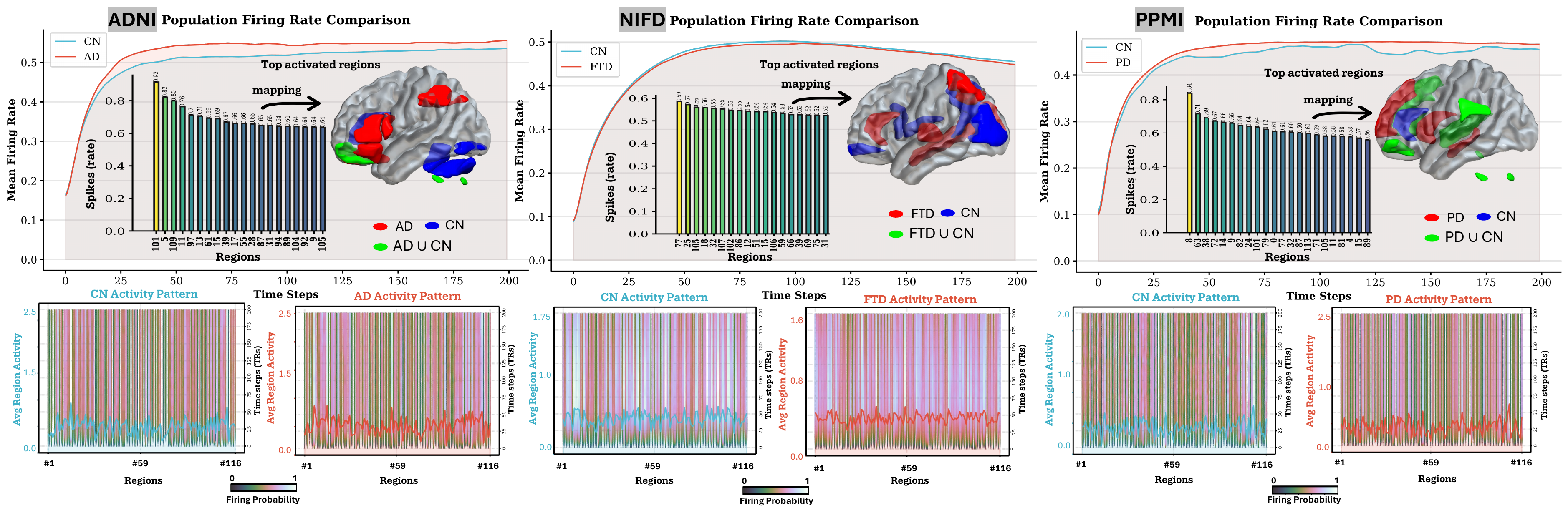}
\vskip -0.10in
\caption{\small Spatiotemporal spiking representations and disease-specific topological region identification. The figure is organized into three dataset-specific columns: ADNI, NIFD, and PPMI, illustrating a top-down visual logic from macro-level dynamics to micro-level patterns: Top Row. Mean firing rate curves for healthy control (CN, blue) and disease (red) groups. Middle Rows. Ranked brain region bar charts based on classification contribution (quantified by spiking intensity (rates)) and their corresponding anatomical mapping to 3D brain projections. Bottom Row. Spatiotemporal raster plots (brain regions vs. time steps) visualizing local spiking event density (firing probability from 0 to 1). }
\vskip -0.1in
\label{fig:spike2}
\end{figure*}

\subsection{Ablation Study}
\label{ablation_study}
\begin{figure}
\centering
\includegraphics[width=0.78\linewidth]{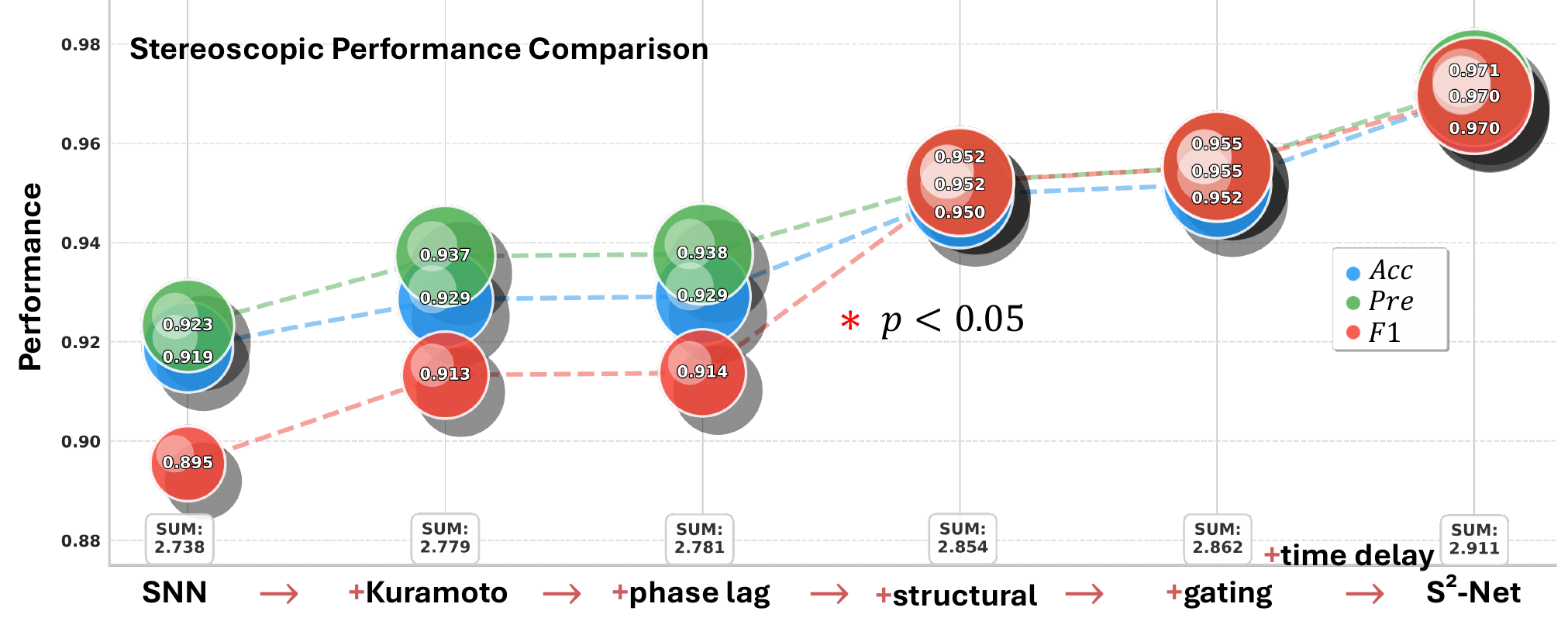}
 \vskip -0.1in
\caption{\small Ablation study results on HCP-YA dataset.}
\vskip -0.15in
\label{fig:ablation}
\end{figure}

\textbf{Contribution of proposed network components.} As shown in Fig. \ref{fig:ablation}, the ablation study reveals that our performance gains are not merely additive but stem from distinct dynamical properties introduced by each component.

\textit{(1) Establishing Basic Dynamics (+Kuramoto, +phase lag).} The initial integration of Kuramoto oscillators with phase lags transitions the system from static representations to traveling-wave dynamics (Sec. \ref{subsec:vector_kuramoto}). This enables the initial decoding of spatiotemporal depth, providing steady performance gains.

\textit{(2) Geometric Regularization via Lyapunov Stability.} The inclusion of structural information triggers a substantial performance leap (F1: $0.914 \to 0.950$), providing empirical validation for the stability analysis in Eq. (\ref{eq:lyapunov}). Without this term, phase differences $\theta_j - \theta_i$ evolve solely based on sensory drive $\gamma(t)$, potentially violating the underlying structural constraints. The introduction of the OT-derived lag $\alpha_{ij}$ modifies the energy landscape $V(\theta; t)$, penalizing phase relationships that contradict the geometric transport cost. Consequently, the network dynamics $\dot{\theta}_i$ are guided along a structure-preserving gradient flow (Eq. \ref{dv}). This confirms that embedding anatomical priors into the phase space acts as a powerful geometric regularizer, filtering out topologically inconsistent noise.

\textit{(3) Adaptive Filtering (+gating, +time delay).} The final evolution to the \modelname{} culminates in two decisive steps. First, the Gating mechanism (F1: $0.955$) serves as an adaptive filter, effectively suppressing residual fluctuations and improving the signal-to-noise ratio. Importantly, this clean signal state enables the final time delay module to operate effectively. By applying the time delay $\tau$, the system establishes high-dimensional synchronization, fully resolving complex spatiotemporal dependencies to secure the ultimate performance peak (F1: 0.970).

\textbf{Sensitivity Analysis of Hyperparameters.} 
To evaluate the robustness of \modelname{}, we conducted a sensitivity analysis on the PPMI dataset, focusing on three critical components of the oscillatory dynamics: coupling strength ($K$), time delay ($\tau$), and oscillation dimension ($D$). As reported in Table \ref{tab:sensitivity}, the default configuration ($K=1.0$, $\tau=2$, $D=4$) yields the most stable balance.

Specifically, extreme values of coupling strength ($K$) significantly impair the model's discriminative power. Weak coupling ($K=0.1$) fails to establish sufficient phase coherence, leading to the lowest F1-score ($38.12\%$), while overly strong coupling ($K=10$) forces premature global synchronization. We find that $K=1.0$ provides the necessary flexibility for phase separation, resulting in a substantial boost in precision ($49.30\%$) compared to the boundary cases.

The ablation of time delay ($\tau=0$) confirms that instantaneous coupling is suboptimal for resolving spatiotemporal patterns. However, we also observe that excessive delays (e.g., $\tau=5$) are equally detrimental, causing a drop in F1-score to $39.06\%$ due to 'over-frustration' in the oscillator network. This suggests that a moderate delay ($\tau=2$) is essential to break phase symmetry while maintaining the temporal alignment required for robust feature representation.

Regarding the \textit{oscillation dimension ($D$)}, the results in Table \ref{tab:sensitivity} demonstrate that \modelname{} is remarkably stable across different dimensions. The performance differences between $D=1, 4,$ and $8$ are minimal, with $D=1$ showing a slight edge in raw accuracy. We select $D=4$ as the default configuration as it provides a robust high-dimensional latent space that consistently captures long-range dependencies without the risk of dimensional collapse or overfitting.
\begin{table}[ht]
\centering
\caption{Sensitivity analysis of \modelname{} hyperparameters on the PPMI dataset. }
\label{tab:sensitivity}
\resizebox{0.65\linewidth}{!}{%
\begin{tabular}{lccc ccc}
\toprule
\textbf{Variable} & \bm{$K$} & \bm{$\tau$} & \bm{$D$} & \textbf{Acc (\%)} & \textbf{Pre (\%)} & \textbf{F1 (\%)} \\
\midrule
\multirow{2}{*}{$K$} 
& 0.1 & 2 & 4 & 66.77 $\pm$ 3.66 & 46.03 $\pm$ 8.22 & 38.12 $\pm$ 6.78 \\
& 10  & 2 & 4 & 65.92 $\pm$ 3.76 & 46.54 $\pm$ 3.64 & 40.24 $\pm$ 4.94 \\
\midrule
\multirow{2}{*}{$\tau$} 
& 1.0 & 0 & 4 & 67.28 $\pm$ 4.04 & 46.25 $\pm$ 3.89 & 40.34 $\pm$ 6.02 \\
& 1.0 & 5 & 4 & 66.93 $\pm$ 3.55 & 46.23 $\pm$ 8.32 & 39.06 $\pm$ 6.37 \\
\midrule
\multirow{2}{*}{$D$} 
& 1.0 & 2 & 1 & 67.76 $\pm$ 4.72 & 47.17 $\pm$ 4.61 & 41.86 $\pm$ 6.23 \\
& 1.0 & 2 & 8 & 67.43 $\pm$ 4.45 & 47.26 $\pm$ 3.68 & 39.44 $\pm$ 4.95 \\
\midrule
\textbf{Default} 
& \textbf{1.0} & \textbf{2} & \textbf{4} & \textbf{67.62 $\pm$ 5.21} & \textbf{49.30 $\pm$ 8.83} & \textbf{41.51 $\pm$ 7.19} \\
\bottomrule
\end{tabular}%
}
\end{table}

\subsection{Implementation Details}
\label{details}
We use the NVIDIA H100 NVL (8GPUs,94G/per) to conduct the involved experiments. We set time delay $\tau=2$, oscillation dimension $D=4$. The detailed hyperparameter settings are shown in Table \ref{tab:model_hyperparams}. Table \ref{tab:time} shows the number of parameters and inference time of each model, \modelname{} demonstrates decent parameter efficiency with a model size of only 0.22 MB, which is comparable to the lightweight Rhythm-SNN and significantly smaller than recurrent baselines like SRNN and LSNN. Regarding inference speed, we observe a trade-off. The inference throughput of \modelname{} is 2.79 samples/s.

\begin{table*}[htbp]
\centering
\caption{{Hyperparameter settings for different models.}}
\renewcommand{\arraystretch}{1.3}
\resizebox{0.9\textwidth}{!}{%
\begin{tabular}{l|ccccccc}
\hline
\textbf{Model} & FFSNN & SRNN & LSNN & ASRNN & DH-SNN & Rhythm-SNN & \textbf{\modelname{}} \\
\hline
\textbf{Optimizer} & Adam & Adam & Adam & Adam & Adam & Adam & Adam \\
\textbf{Learning rate} & $10^{-2}$ & $10^{-2}$ & $10^{-2}$ & $10^{-2}$ & $10^{-2}$ & $ 10^{-2}$ & $10^{-3}$ \\ 
\textbf{Batch size} & 256 & 64 & 256 & 64 & 16 & 16 & 16 \\
\textbf{Epochs} & 250 & 250 & 250 & 250 & 500 & 500 & 500 \\
\textbf{LR Schedule} & \makecell[c]{StepLR \\ \small{(100, 0.75)}} & \makecell[c]{StepLR \\ \small{(100, 0.75)}} & \makecell[c]{StepLR \\ \small{(100, 0.75)}} & \makecell[c]{StepLR \\ \small{(100, 0.75)}} & \makecell[c]{StepLR \\ \small{(100, 0.75)}} & \makecell[c]{StepLR \\ \small{(100, 0.75)}} & \makecell[c]{StepLR \\ \small{(100, 0.75)}} \\ 
\textbf{Hidden dim} & 1024 & 1024 & 1024 & 1024 & 1024 & 1024 & 64 \\
\hline
\end{tabular}}
\label{tab:model_hyperparams}
\end{table*}

\begin{table*}[htbp]
\centering
\caption{{Number of network parameters ans inference time on HCP-WM dataset ($N=116$ and $T=405$).}}
\renewcommand{\arraystretch}{1.3}
\resizebox{0.9\textwidth}{!}{%
\begin{tabular}{l|ccccccc} 
\hline
\textbf{Model} & FFSNN & SRNN & LSNN & ASRNN & DH-SNN & Rhythm-SNN & \textbf{\modelname{}} \\ 
\hline
\textbf{\# parameters (MB)} & 0.52 & 4.52 & 4.52 & 4.53 & 0.18 & 0.12 & \textbf{0.22} \\
\textbf{infer time (samples/s)} & 9.68 & 6.05 & 3.76 & 3.94 & 5.50 & 8.51 & \textbf{2.79} \\ 

\hline
\end{tabular}}
\label{tab:time}
\end{table*}

\textbf{Accessibility.}
All datasets used in this study are publicly available and can be accessed from their
respective sources. HCP-A is available at \url{https://www.humanconnectome.org/study/hcp-lifespan-aging/data-releases}. HCP-YA and HCP-WM are available at \url{https://www.humanconnectome.org/study/hcp-young-adult/data-releases}. UKB dataset is available at \url{https://www.ukbiobank.ac.uk/}. ADNI, PPMI and NIFD can be found in \url{https://ida.loni.usc.edu/login.jsp}. The S-MNIST and PS-MNIST datasets are derived from the MNIST dataset,
which is available at \url{http://yann.lecun.com/exdb/mnist/}. 
The SHD dataset can be accessed at \url{https://zenkelab.org/resources/spiking-heidelberg-datasets-shd/}.  
The ECG dataset is publicly available from PhysioNet at \url{https://physionet.org/content/qtdb/1.0.0/}.  
The Google Speech Commands (GSC) dataset can be obtained from
\url{https://www.tensorflow.org/datasets/catalog/speech_commands}.

All preprocessing steps and experimental protocols follow standard practices reported
in prior literature \citep{yan2025efficient}, ensuring full reproducibility of the results. 

\subsection{Limitations and Future Works}
\label{limitation}
While \modelname{} introduces biologically plausible mechanisms for binding, the model introduces specific hyperparameters that require calibration, namely the coupling strength $K$, time delay $\tau$, and the dimensionality of the oscillator space $D$. In biological systems, these constants naturally vary across brain regions and cognitive states. Finally, the lack of an adaptive mechanism to dynamically adjust $\tau$ for different datasets or signal modalities represents a limitation. Future iterations could incorporate learnable delays or delays based on structural tract lengths to improve generalization across diverse datasets.

\subsection{Impact Statement}
\label{impact}
This work introduces a biologically plausible (\modelname{}) inspired by cortical oscillatory synchronization. We anticipate several positive societal impacts stemming from this research. First, by advancing neuromorphic computing and leveraging time-delayed coordination, our model offers a pathway toward significantly reducing the carbon footprint and computational costs of AI systems, facilitating sustainable deployment on resource-constrained edge devices. Second, our framework demonstrates potential in analyzing complex spatiotemporal biomedical data, as evidenced by our experiments on neurodegenerative disease datasets (e.g., ADNI, PPMI, and NIFD). This could contribute to the development of robust, interpretable computer-aided diagnostic tools for the early detection of neurological disorders.

However, the analysis of sensitive neural and physiological data necessitates rigorous adherence to privacy standards and regulations to prevent data misuse. We encourage future researchers to prioritize fairness, interpretability, and data security when adapting this brain-inspired framework for real-world applications.



\begin{thebibliography}{47}
\providecommand{\natexlab}[1]{#1}
\providecommand{\url}[1]{\texttt{#1}}
\expandafter\ifx\csname urlstyle\endcsname\relax
  \providecommand{\doi}[1]{doi: #1}\else
  \providecommand{\doi}{doi: \begingroup \urlstyle{rm}\Url}\fi

\bibitem[Bellec et~al.(2018)Bellec, Salaj, Subramoney, Legenstein, and Maass]{bellec2018long}
Guillaume Bellec, Darjan Salaj, Anand Subramoney, Robert Legenstein, and Wolfgang Maass.
\newblock Long short-term memory and learning-to-learn in networks of spiking neurons.
\newblock \emph{Advances in Neural Information Processing Systems}, 31, 2018.

\bibitem[Benussi et~al.(2017)Benussi, Di~Lorenzo, Dell'Era, Cosseddu, Alberici, Caratozzolo, Cotelli, Micheli, Rozzini, Depari, et~al.]{benussi2017transcranial}
Alberto Benussi, Francesco Di~Lorenzo, Valentina Dell'Era, Maura Cosseddu, Antonella Alberici, Salvatore Caratozzolo, Maria~Sofia Cotelli, Anna Micheli, Luca Rozzini, Alessandro Depari, et~al.
\newblock Transcranial magnetic stimulation distinguishes alzheimer disease from frontotemporal dementia.
\newblock \emph{Neurology}, 89\penalty0 (7):\penalty0 665--672, 2017.

\bibitem[Bookheimer et~al.(2019)Bookheimer, Salat, Terpstra, Ances, Barch, Buckner, Burgess, Curtiss, Diaz-Santos, Elam, et~al.]{bookheimer2019lifespan}
Susan~Y Bookheimer, David~H Salat, Melissa Terpstra, Beau~M Ances, Deanna~M Barch, Randy~L Buckner, Gregory~C Burgess, Sandra~W Curtiss, Mirella Diaz-Santos, Jennifer~Stine Elam, et~al.
\newblock The lifespan human connectome project in aging: an overview.
\newblock \emph{Neuroimage}, 185:\penalty0 335--348, 2019.

\bibitem[Breakspear(2017)]{breakspear2017dynamic}
Michael Breakspear.
\newblock Dynamic models of large-scale brain activity.
\newblock \emph{Nature Neuroscience}, 20\penalty0 (3):\penalty0 340--352, 2017.

\bibitem[Bronski et~al.(2018)Bronski, Carty, and DeVille]{bronski2018configurational}
Jared~C Bronski, Thomas Carty, and Lee DeVille.
\newblock Configurational stability for the kuramoto--sakaguchi model.
\newblock \emph{Chaos: An Interdisciplinary Journal of Nonlinear Science}, 28\penalty0 (10), 2018.

\bibitem[Chen(1982)]{chen1982topological}
Lin Chen.
\newblock Topological structure in visual perception.
\newblock \emph{Science}, 218\penalty0 (4573):\penalty0 699--700, 1982.

\bibitem[Cramer et~al.(2020)Cramer, Stradmann, Schemmel, and Zenke]{cramer2020heidelberg}
Benjamin Cramer, Yannik Stradmann, Johannes Schemmel, and Friedemann Zenke.
\newblock The heidelberg spiking data sets for the systematic evaluation of spiking neural networks.
\newblock \emph{IEEE Transactions on Neural Networks and Learning Systems}, 33\penalty0 (7):\penalty0 2744--2757, 2020.

\bibitem[Cuturi(2013)]{cuturi2013sinkhorn}
Marco Cuturi.
\newblock Sinkhorn distances: Lightspeed computation of optimal transport.
\newblock \emph{Advances in neural information processing systems}, 26, 2013.

\bibitem[Dan et~al.(2024)Dan, Wei, Kim, and Wu]{dan2024exploring}
Tingting Dan, Ziquan Wei, Won~Hwa Kim, and Guorong Wu.
\newblock Exploring the enigma of neural dynamics through a scattering-transform mixer landscape for riemannian manifold.
\newblock In \emph{International Conference on Machine Learning}, pages 9976--9990. PMLR, 2024.

\bibitem[Dan et~al.(2025)Dan, Ding, and Wu]{dan2025explore}
Tingting Dan, Jiaqi Ding, and Guorong Wu.
\newblock Explore brain-inspired machine intelligence for connecting dots on graphs through holographic blueprint of oscillatory synchronization.
\newblock \emph{Nature Communications}, 16\penalty0 (1):\penalty0 9425, 2025.

\bibitem[Deco et~al.(2009)Deco, Jirsa, McIntosh, Sporns, and K{\"o}tter]{deco2009key}
Gustavo Deco, Viktor Jirsa, Anthony~R McIntosh, Olaf Sporns, and Rolf K{\"o}tter.
\newblock Key role of coupling, delay, and noise in resting brain fluctuations.
\newblock \emph{Proceedings of the National Academy of Sciences}, 106\penalty0 (25):\penalty0 10302--10307, 2009.

\bibitem[Felleman and Van~Essen(1991)]{felleman1991distributed}
Daniel~J Felleman and David~C Van~Essen.
\newblock Distributed hierarchical processing in the primate cerebral cortex.
\newblock \emph{Cerebral cortex (New York, NY: 1991)}, 1\penalty0 (1):\penalty0 1--47, 1991.

\bibitem[Fries(2005)]{fries2005mechanism}
Pascal Fries.
\newblock A mechanism for cognitive dynamics: neuronal communication through neuronal coherence.
\newblock \emph{Trends in cognitive sciences}, 9\penalty0 (10):\penalty0 474--480, 2005.

\bibitem[Greff et~al.(2020)Greff, Van~Steenkiste, and Schmidhuber]{greff2020binding}
Klaus Greff, Sjoerd Van~Steenkiste, and J{\"u}rgen Schmidhuber.
\newblock On the binding problem in artificial neural networks.
\newblock \emph{arXiv preprint arXiv:2012.05208}, 2020.

\bibitem[Gu and Dao(2023)]{gu2023mamba}
Albert Gu and Tri Dao.
\newblock Mamba: Linear-time sequence modeling with selective state spaces.
\newblock \emph{arXiv preprint arXiv:2312.00752}, 2023.

\bibitem[Harikrishnan and Nagaraj(2021)]{harikrishnan2021noise}
Nellippallil~Balakrishnan Harikrishnan and Nithin Nagaraj.
\newblock When noise meets chaos: Stochastic resonance in neurochaos learning.
\newblock \emph{Neural Networks}, 143:\penalty0 425--435, 2021.

\bibitem[Jack~Jr et~al.(2008)Jack~Jr, Bernstein, Fox, Thompson, Alexander, Harvey, Borowski, Britson, L.~Whitwell, Ward, et~al.]{jack2008alzheimer}
Clifford~R Jack~Jr, Matt~A Bernstein, Nick~C Fox, Paul Thompson, Gene Alexander, Danielle Harvey, Bret Borowski, Paula~J Britson, Jennifer L.~Whitwell, Chadwick Ward, et~al.
\newblock The alzheimer's disease neuroimaging initiative (adni): Mri methods.
\newblock \emph{Journal of Magnetic Resonance Imaging: An Official Journal of the International Society for Magnetic Resonance in Medicine}, 27\penalty0 (4):\penalty0 685--691, 2008.

\bibitem[Ju et~al.(2014)Ju, Dai, Cheng, and Yang]{ju2014dynamics}
Ping Ju, Qionglin Dai, Hongyan Cheng, and Junzhong Yang.
\newblock Dynamics in the sakaguchi-kuramoto model with two subpopulations.
\newblock \emph{Physical Review E}, 90\penalty0 (1):\penalty0 012903, 2014.

\bibitem[Kuramoto(2005)]{kuramoto2005self}
Yoshiki Kuramoto.
\newblock Self-entrainment of a population of coupled non-linear oscillators.
\newblock In \emph{International symposium on mathematical problems in theoretical physics: January 23--29, 1975, kyoto university, kyoto/Japan}, pages 420--422. Springer, 2005.

\bibitem[Le et~al.(2015)Le, Jaitly, and Hinton]{le2015simple}
Quoc~V Le, Navdeep Jaitly, and Geoffrey~E Hinton.
\newblock A simple way to initialize recurrent networks of rectified linear units.
\newblock \emph{arXiv preprint arXiv:1504.00941}, 2015.

\bibitem[Locatello et~al.(2020)Locatello, Weissenborn, Unterthiner, Mahendran, Heigold, Uszkoreit, Dosovitskiy, and Kipf]{locatello2020object}
Francesco Locatello, Dirk Weissenborn, Thomas Unterthiner, Aravindh Mahendran, Georg Heigold, Jakob Uszkoreit, Alexey Dosovitskiy, and Thomas Kipf.
\newblock Object-centric learning with slot attention.
\newblock \emph{Advances in neural information processing systems}, 33:\penalty0 11525--11538, 2020.

\bibitem[London and H{\"a}usser(2005)]{london2005dendritic}
Michael London and Michael H{\"a}usser.
\newblock Dendritic computation.
\newblock \emph{Annual Review of Neuroscience}, 28\penalty0 (1):\penalty0 503--532, 2005.

\bibitem[Maass(1997)]{maass1997networks}
Wolfgang Maass.
\newblock Networks of spiking neurons: the third generation of neural network models.
\newblock \emph{Neural networks}, 10\penalty0 (9):\penalty0 1659--1671, 1997.

\bibitem[Marek et~al.(2011)Marek, Jennings, Lasch, Siderowf, Tanner, Simuni, Coffey, Kieburtz, Flagg, Chowdhury, et~al.]{marek2011parkinson}
Kenneth Marek, Danna Jennings, Shirley Lasch, Andrew Siderowf, Caroline Tanner, Tanya Simuni, Chris Coffey, Karl Kieburtz, Emily Flagg, Sohini Chowdhury, et~al.
\newblock The parkinson progression marker initiative (ppmi).
\newblock \emph{Progress in Neurobiology}, 95\penalty0 (4):\penalty0 629--635, 2011.

\bibitem[McGregor and Nelson(2019)]{mcgregor2019circuit}
Matthew~M McGregor and Alexandra~B Nelson.
\newblock Circuit mechanisms of parkinson’s disease.
\newblock \emph{Neuron}, 101\penalty0 (6):\penalty0 1042--1056, 2019.

\bibitem[Miller et~al.(2016)Miller, Alfaro-Almagro, Bangerter, Thomas, Yacoub, Xu, Bartsch, Jbabdi, Sotiropoulos, Andersson, et~al.]{miller2016multimodal}
Karla~L Miller, Fidel Alfaro-Almagro, Neal~K Bangerter, David~L Thomas, Essa Yacoub, Junqian Xu, Andreas~J Bartsch, Saad Jbabdi, Stamatios~N Sotiropoulos, Jesper~LR Andersson, et~al.
\newblock Multimodal population brain imaging in the uk biobank prospective epidemiological study.
\newblock \emph{Nature Neuroscience}, 19\penalty0 (11):\penalty0 1523--1536, 2016.

\bibitem[Miyato et~al.(2025)Miyato, L{\"o}we, Geiger, and Welling]{miyato2025artificial}
Takeru Miyato, Sindy L{\"o}we, Andreas Geiger, and Max Welling.
\newblock Artificial kuramoto oscillatory neurons.
\newblock In \emph{The Thirteenth International Conference on Learning Representations}, 2025.

\bibitem[Neftci et~al.(2019)Neftci, Mostafa, and Zenke]{neftci2019surrogate}
Emre~O Neftci, Hesham Mostafa, and Friedemann Zenke.
\newblock Surrogate gradient learning in spiking neural networks: Bringing the power of gradient-based optimization to spiking neural networks.
\newblock \emph{IEEE Signal Processing Magazine}, 36\penalty0 (6):\penalty0 51--63, 2019.

\bibitem[Nguyen et~al.(2024)Nguyen, Honda, Sano, Nguyen, Nakamura, and Nguyen]{nguyen2024coupled}
Tuan Nguyen, Hirotada Honda, Takashi Sano, Vinh Nguyen, Shugo Nakamura, and Tan~Minh Nguyen.
\newblock From coupled oscillators to graph neural networks: Reducing over-smoothing via a kuramoto model-based approach.
\newblock In \emph{International Conference on Artificial Intelligence and Statistics}, pages 2710--2718. PMLR, 2024.

\bibitem[Palop and Mucke(2016)]{palop2016network}
Jorge~J Palop and Lennart Mucke.
\newblock Network abnormalities and interneuron dysfunction in alzheimer disease.
\newblock \emph{Nature Reviews Neuroscience}, 17\penalty0 (12):\penalty0 777--792, 2016.


\bibitem[Planche et~al.(2023)Planche, Mansencal, Manjon, Tourdias, Catheline, Coup{\'e}, Initiative, and the National Alzheimer's Coordinating Center~cohort]{planche2023anatomical}
Vincent Planche, Boris Mansencal, Jos{\'e}~V Manjon, Thomas Tourdias, Gwena{\"e}lle Catheline, Pierrick Coup{\'e}, {Frontotemporal Lobar Degeneration~Neuroimaging Initiative}, and {the National Alzheimer's Coordinating Center~cohort}.
\newblock Anatomical mri staging of frontotemporal dementia variants.
\newblock \emph{Alzheimer's \& Dementia}, 19\penalty0 (8):\penalty0 3283--3294, 2023.

\bibitem[Rao and Ballard(1999)]{rao1999predictive}
Rajesh~PN Rao and Dana~H Ballard.
\newblock Predictive coding in the visual cortex: a functional interpretation of some extra-classical receptive-field effects.
\newblock \emph{Nature neuroscience}, 2\penalty0 (1):\penalty0 79--87, 1999.

\bibitem[Roelfsema et~al.(1998)Roelfsema, Lamme, and Spekreijse]{roelfsema1998object}
Pieter~R Roelfsema, Victor~AF Lamme, and Henk Spekreijse.
\newblock Object-based attention in the primary visual cortex of the macaque monkey.
\newblock \emph{Nature}, 395\penalty0 (6700):\penalty0 376--381, 1998.

\bibitem[Roy et~al.(2019)Roy, Jaiswal, and Panda]{roy2019towards}
Kaushik Roy, Akhilesh Jaiswal, and Priyadarshini Panda.
\newblock Towards spike-based machine intelligence with neuromorphic computing.
\newblock \emph{Nature}, 575\penalty0 (7784):\penalty0 607--617, 2019.

\bibitem[Rusch et~al.(2022)Rusch, Chamberlain, Rowbottom, Mishra, and Bronstein]{rusch2022graph}
T~Konstantin Rusch, Ben Chamberlain, James Rowbottom, Siddhartha Mishra, and Michael Bronstein.
\newblock Graph-coupled oscillator networks.
\newblock In \emph{International conference on machine learning}, pages 18888--18909. PMLR, 2022.

\bibitem[Singer(1999)]{singer1999neuronal}
Wolf Singer.
\newblock Neuronal synchrony: a versatile code for the definition of relations?
\newblock \emph{Neuron}, 24\penalty0 (1):\penalty0 49--65, 1999.

\bibitem[Todri-Sanial et~al.(2024)Todri-Sanial, Delacour, Abernot, and Sabo]{todri2024computing}
Aida Todri-Sanial, Corentin Delacour, Madeleine Abernot, and Filip Sabo.
\newblock Computing with oscillators from theoretical underpinnings to applications and demonstrators.
\newblock \emph{Npj unconventional computing}, 1\penalty0 (1):\penalty0 14, 2024.

\bibitem[Tzourio-Mazoyer et~al.(2002)Tzourio-Mazoyer, Landeau, Papathanassiou, Crivello, Etard, Delcroix, Mazoyer, and Joliot]{tzourio2002automated}
Nathalie Tzourio-Mazoyer, Brigitte Landeau, Dimitri Papathanassiou, Fabrice Crivello, Octave Etard, Nicolas Delcroix, Bernard Mazoyer, and Marc Joliot.
\newblock Automated anatomical labeling of activations in spm using a macroscopic anatomical parcellation of the mni mri single-subject brain.
\newblock \emph{Neuroimage}, 15\penalty0 (1):\penalty0 273--289, 2002.

\bibitem[Van~Essen et~al.(2013)Van~Essen, Smith, Barch, Behrens, Yacoub, Ugurbil, Consortium, et~al.]{van2013wu}
David~C Van~Essen, Stephen~M Smith, Deanna~M Barch, Timothy~EJ Behrens, Essa Yacoub, Kamil Ugurbil, Wu-Minn~HCP Consortium, et~al.
\newblock The wu-minn human connectome project: an overview.
\newblock \emph{Neuroimage}, 80:\penalty0 62--79, 2013.

\bibitem[Varela et~al.(2001)Varela, Lachaux, Rodriguez, and Martinerie]{varela2001brainweb}
Francisco Varela, Jean-Philippe Lachaux, Eugenio Rodriguez, and Jacques Martinerie.
\newblock The brainweb: phase synchronization and large-scale integration.
\newblock \emph{Nature reviews neuroscience}, 2\penalty0 (4):\penalty0 229--239, 2001.

\bibitem[Warden(2018)]{warden2018speech}
Pete Warden.
\newblock Speech commands: A dataset for limited-vocabulary speech recognition.
\newblock \emph{arXiv preprint arXiv:1804.03209}, 2018.

\bibitem[Wu et~al.(2019)Wu, Deng, Li, Zhu, Xie, and Shi]{wu2019direct}
Yujie Wu, Lei Deng, Guoqi Li, Jun Zhu, Yuan Xie, and Luping Shi.
\newblock Direct training for spiking neural networks: Faster, larger, better.
\newblock In \emph{Proceedings of the AAAI Conference on Artificial Intelligence}, volume~33, pages 1311--1318, 2019.

\bibitem[Yan et~al.(2025)Yan, Yang, Wu, Liu, Zhang, Li, Tan, and Wu]{yan2025efficient}
Yinsong Yan, Qu~Yang, Yujie Wu, Hanwen Liu, Malu Zhang, Haizhou Li, Kay~Chen Tan, and Jibin Wu.
\newblock Efficient and robust temporal processing with neural oscillations modulated spiking neural networks.
\newblock \emph{Nature Communications}, 16\penalty0 (1):\penalty0 8651, 2025.

\bibitem[Yeung and Strogatz(1999)]{yeung1999time}
MK~Stephen Yeung and Steven~H Strogatz.
\newblock Time delay in the kuramoto model of coupled oscillators.
\newblock \emph{Physical review letters}, 82\penalty0 (3):\penalty0 648, 1999.

\bibitem[Yin et~al.(2020)Yin, Corradi, and Boht{\'e}]{yin2020effective}
Bojian Yin, Federico Corradi, and Sander~M Boht{\'e}.
\newblock Effective and efficient computation with multiple-timescale spiking recurrent neural networks.
\newblock In \emph{International Conference on Neuromorphic Systems 2020}, pages 1--8, 2020.

\bibitem[Yin et~al.(2021)Yin, Corradi, and Boht{\'e}]{yin2021accurate}
Bojian Yin, Federico Corradi, and Sander~M Boht{\'e}.
\newblock Accurate and efficient time-domain classification with adaptive spiking recurrent neural networks.
\newblock \emph{Nature Machine Intelligence}, 3\penalty0 (10):\penalty0 905--913, 2021.

\bibitem[Zenke and Vogels(2021)]{zenke2021remarkable}
Friedemann Zenke and Tim~P Vogels.
\newblock The remarkable robustness of surrogate gradient learning for instilling complex function in spiking neural networks.
\newblock \emph{Neural Computation}, 33\penalty0 (4):\penalty0 899--925, 2021.

\end{thebibliography}
\end{document}